\newcommand{\Hnull}{\mathcal{H}_0}
\newcommand{\Halt}{\mathcal{H}_1}
\newcommand{\Honull}{\mathcal{{H}}_0}
\newcommand{\Hoalt}{\mathcal{{H}}_1}
\newtheorem{theorem}{\noindent \textit{Theorem}}
\newtheorem{lemma}{\noindent \textit{Lemma}}
\newtheorem{corollary}{Corollary}
\DeclareMathOperator*{\argmin}{argmin}
\begin{document}

\title{Location Verification Systems Under Spatially Correlated Shadowing}

\author{Shihao~Yan, Ido~Nevat, Gareth~W.~Peters, and Robert~Malaney

\thanks{S. Yan and R. Malaney are with the School of Electrical Engineering and Telecommunications, The University of New South Wales, Sydney, NSW 2052, Australia (email: shihao.yan@unsw.edu.au; r.malaney@unsw.edu.au).}
\thanks{I. Nevat is with Institute for Infocomm Research, A$^{\star}$STAR, Singapore (email: ido-nevat@i2r.a-star.edu.sg).}
\thanks{G. W. Peters is with the Department of Statistical Science, University College London, London, United Kingdom (email: gareth.peters@ucl.ac.uk).}
\thanks{Part of this work has been presented in IEEE ICC 2014 \cite{yan2014signal}.}}



\vspace{-3cm}

\maketitle

\begin{abstract}
The verification of the location information utilized in wireless communication networks is a subject of growing importance.  In this work we formally analyze, for the first time, the  performance of a wireless Location Verification System (LVS)  under the realistic  setting of spatially correlated shadowing. Our analysis illustrates that anticipated levels of correlated shadowing can lead to a dramatic performance improvement of a Received Signal Strength (RSS)-based LVS.  We also analyze the performance of an LVS that utilizes Differential Received Signal Strength (DRSS), formally proving the rather counter-intuitive result that a DRSS-based LVS has identical performance to that of an RSS-based LVS, for all levels of correlated shadowing. Even more surprisingly, the identical performance of RSS and DRSS-based LVSs is found to hold even when the adversary does not optimize his true location. Only in the case where the adversary does not optimize \emph{all} variables under her control, do we find the performance of an RSS-based LVS to be better than a DRSS-based LVS. The results reported here are important for a wide range of emerging wireless communication applications whose proper functioning depends on the authenticity of the location information reported by a transceiver.
\end{abstract}

\begin{keywords}
Location verification, wireless networks, Received Signal Strength (RSS), Differential Received Signal Strength (DRSS), spatially correlated shadowing.
\end{keywords}

\IEEEpeerreviewmaketitle

\section{Introduction}

As location information becomes of growing importance in wireless networks, procedures to formally authenticate (verify) that information has attracted considerable research interest \cite{malaney2004location,vora2006secure,malaney2007securing,capkun2008secure,sheng2008detecting,chen2010detecting,zekavat2011handbook,chiang2012secure,yan2012information,yan2014optimal}. In a wide range of emerging wireless networks, the system may request a device (user) to report a location  obtained through some independent means (\emph{e.g.}, via a Global Positioning System (GPS) receiver embedded in the device). Such location information can be used to empower some functionality of the wireless network such as in  geographic routing protocols (\emph{e.g.}, \cite{leinmuller2005influence,leinmuller2006greedy,rabayah2012anew}), to provide for location-based access control protocols (\emph{e.g.}, \cite{chen2006inverting,capkun2010integrity}) or to provide some new location-based services (\emph{e.g.}, location-based key generation \cite{liu2008lke}). However, the use of location information as an enabler of functionality or services within the wireless network, also provides ample opportunity to attack the system since any reported location information (such as GPS) can be easily spoofed.
Such potential attacks are perhaps most concerning in the context of emerging Intelligent Transport Systems (ITS) such as wireless vehicular networks, where spoofed positions may lead to catastrophic results for vehicular collision-avoidance systems \cite{yan2009providing}.


In this work, we focus on a formal analysis of LVSs that attempt to verify a user's claimed location (such as a GPS location) based on independent observations received by the the wireless communications network itself. The inference in such an LVS is carried out to determine whether the claimed location represents a \textit{legitimate user} (a user who reports/claims to the network a location \emph{consistent} with his true position) or a \textit{malicious user} (a user who reports to the network a location \emph{inconsistent} with his true position).
A key difference between an LVS and a localization system is that the output of an LVS is a binary decision (legitimate/malicious user), whereas in localization system the output is an estimated location \emph{e.g.}, \cite{wang2011anew,gu2011localization,montorsi2014map}.
As such, an LVS is  provided with some additional \emph{a priori} (but potentially false) location information (\emph{i.e.}, a claimed location).

Since the RSS measured by wireless network is easily obtained, many location verification algorithms that utilize RSS as input observations have been developed (\emph{e.g.}, \cite{malaney2007securing,sheng2008detecting,chen2010detecting,yan2012information,yan2014optimal}). In addition, RSS can be readily combined with other location information metrics in order to improve the performance of a localization system \cite{malaney2007nuisance,wang2013cramer}. However, shadowing is one of the most influential factors in RSS-based LVSs, and all existing studies in RSS-based LVSs have made a simplified but unrealistic assumption that the shadowing at two different locations is uncorrelated. As per many empirical studies, the shadowing at different locations will be significantly correlated when the locations are close to each other or different locations possess similar terrain configurations \emph{e.g.}, \cite{gudmundson1991correlation,liberti1992statistics,zayana1998measurements}.
Although some specific  studies  have investigated the performance of RSS-based localization systems under correlated shadowing \cite{patwari2008effects,patwari2009correlated,vaghefi2013received}, the impact of spatially correlated shadowing on RSS-based LVSs under realistic threat models has not been previously explored. This leaves an important gap in our understanding on the performance levels of RSS-based LVSs in realistic wireless channel settings and under realistic threat models. The main purpose of this paper is to close this gap.

Further to our considerations of RSS-based LVSs, we note that there could be circumstances when  use of Differential Received Signal Strength (DRSS) in the LVS context may be beneficial. Indeed it is well known that there are a range of scenarios in which the use of DRSS is more suitable for wireless location acquisition \cite{wang2013robust}. One example is where users do not have a common transmit power setting on all devices. However, the performance of DRSS-based LVSs have not yet been analyzed in the literature. This work also closes this gap, extending our analysis of DRSS-based LVSs to the correlated shadowing regime. This will allow us to provide a detailed performance comparison between RSS-based LVSs and DRSS-based LVSs  under correlated shadowing - a comparison that provides for a few surprising results.

A summary of the main contributions of this work are as follows. (i) Under spatially correlated $\log$-normal shadowing, we analyze the detection performance of an RSS-based LVS in terms of false positive and detection rates. Our analysis demonstrates that the spatial correlation of the shadowing leads to a significant performance improvement for the RSS-based LVS relative to the case with uncorrelated shadowing (a doubling of the detection rate for a given false positive rate for anticipated correlation levels). (ii) We analyze the detection performance of a DRSS-based LVS under spatially correlated shadowing, proving that the detection performance of the DRSS-based LVS is identical to that of the RSS-based LVS. As we discuss later, this result is rather surprising. (iii) We analyze our systems under a relaxed threat model scenario in which the adversary whose actual location is physically constrained (\emph{e.g.}, constrained within a building) and therefore cannot optimize his location for the attack. We show that even in these circumstance the performance of the RSS-based LVS and the DRSS-based LVS remain identical. (iv) Finally, we illustrate the case where the RSS-based LVS do have advantages over the DRSS-Based LVS, namely, when the adversary does not (or cannot) optimize his boosted transmit power level.

The rest of this paper is organized as follows. Section \ref{sec_system} details our system model. In Section \ref{sec_RSS}, the detection performance of an RSS-based LVS is analyzed under spatially correlated shadowing. In Section \ref{sec_DRSS}, the detection performance of a DRSS-based LVS is analyzed, and a throughout performance comparison between the RSS-based LVS and the DRSS-based LVS is provided. Section \ref{sec_numerical} provides numerical results to verify the accuracy of our analysis. Finally, Section \ref{sec_conclusion} draws concluding remarks.

\section{System Model}\label{sec_system}

\subsection{Assumptions}

We outline the system model and state the assumptions adopted in this work.

\begin{enumerate}
\item {A \emph{single} user (legitimate or malicious) reports his claimed location, $\mathbf{x}_c = [x^1_c, x^2_c] \in \mathbb{R}^2$, to a network with $N$ Base Stations (BSs) in the communication range of the user, where the publicly known location of the $i$-th BS is $\mathbf{x}_i = [x^1_i,x^2_i] \in \mathbb{R}^2$ ($i = 1, 2, \dots, N$). One of the $N$ BSs is the Process Center (PC), and all other BSs will transmit the measurements collected from the user to the PC. The PC is to make decisions based on the user's claimed location and the measurements collected by all the $N$ BSs.}

\item {A user (legitimate or malicious) can obtain his true position, $\mathbf{x}_t = [x^1_t,x^2_t]$, from his localization equipment (\emph{e.g.}, GPS), and that the localization error is zero. Thus, a legitimate user's claimed location, $\mathbf{x}_c$, is exactly the same as his true location. However, a malicious user will falsify (spoof) his claimed position in an attempt to fool the LVS. We assume the spoofed claimed location of the malicious user is also $\mathbf{x}_c$.}

\item We adopt the minimum distance model as our threat model, in which the distance between the malicious user's true location and his claimed location is greater or equal to $r$, \emph{i.e.}, $|\mathbf{x}_c -\mathbf{x}_t| \geq r$.

\item {We denote the null hypothesis where the user is legitimate as $\Hnull$, and denote the alternative hypothesis where the user is malicious as $\Halt$. The \emph{a priori} knowledge at the LVS can be summarized as
\begin{eqnarray}\label{prior}
 \left\{ \begin{aligned}\label{ncon}
        \ & \Hnull:~\mathbf{x}_c = \mathbf{x}_t \;\;\text{(legitimate user)}\\
        \ & \Halt:~|\mathbf{x}_c-\mathbf{x}_t| \geq r \;\;\text{(malicious user)}.
         \end{aligned} \right.
\end{eqnarray}}
\end{enumerate}

\subsection{Observation Model under $\Hnull$ (legitimate user)}

Based on the $\log$-normal propagation model, the RSS (in dB) received by the $i$-th BS from a legitimate user, $y_i$, is given by
\begin{eqnarray}\label{observ_H0}
y_i = u_i + \omega_i, ~~i = 1, 2, \dots, N,
\end{eqnarray}
where
\begin{eqnarray}\label{definition_u}
u_i = p - 10\gamma \log_{10}\left(\frac{d_i^c}{d}\right),
\end{eqnarray}
and $p$ is a reference received power corresponding to a reference distance $d$, $\gamma$ is the path loss exponent, $\omega_i$ is a zero-mean normal random variable with variance $\sigma_{dB}^2$, and $d_i^c$ is the Euclidean distance from the $i$-th BS to the legitimate user's claimed location (also his true location) given by $d_i^c=|\mathbf{x}_c-\mathbf{x}_i|$. In practice, in order to determine the values of a pair of $p$ and $d$ we have to know the transmit power of a legitimate user. Under spatially correlated shadowing, $\omega_i$ is correlated to $\omega_j$ ($j = 1,2,\dots,N$), and the $N \times N$ covariance matrix of $\bm{\sigma} = [\sigma_1, \dots, \omega_N]^T$ is denoted as $\mathbf{R}$. Adopting the well-known spatially correlated shadowing model of \cite{zekavat2011handbook,gudmundson1991correlation}, the $(i,j)$-th element of $\mathbf{R}$ is given by
\begin{equation}\label{shadowing}
R_{ij} = \sigma_{dB}^2 \exp\left(-\frac{d_{ij}}{D_c}\ln 2\right), ~~j = 1, 2, \dots, N,
\end{equation}
where $d_{ij} = ||\mathbf{x}_i-\mathbf{x}_j||_2$ is the Euclidean distance from the $i$-th BS to the $j$-th BS, and $D_c$ is
a constant in units of distance, at which the correlation coefficient reduces to $1/2$ (in this work all distances are in meters).
From \eqref{shadowing}, we can see that the correlation between $\omega_i$ and $\omega_j$ decreases as $d_{ij}$
increases ($R_{ij} = \sigma_{dB}^2$ when $i=j$, and $R_{ij} \rightarrow 0$ as $d_{ij} \rightarrow \infty$). We also note that $R_{ij}$
increases as $D_c$ increases for a given $d_{ij}$. As such, $D_c$ is a parameter that indicates the degree of shadowing correlation in
some specific environment (for a given $d_{ij}$, a larger $D_c$ means that the shadowing is more correlated).

Based on \eqref{observ_H0}, we can see that under $\Hnull$ the $N$-dimensional observation vector $\mathbf{y} = [y_1, \dots, y_N]^T$ follows a multivariate normal distribution, which is
\begin{equation}\label{likelihood_H0}
f\left(\mathbf{y}|\Hnull\right) = \mathcal{N} \left(\mathbf{u}, \mathbf{R}\right),
\end{equation}
where $\mathbf{u} = [u_1, u_2, \dots, u_N]^T$ is the mean vector.
\subsection{Observation Model under $\Halt$ (malicious user)}

In practice, in addition to spoofing the claimed location, the malicious user can also adjust his transmit power to impact the RSS values received by all BSs in order to minimize the probability of being detected. As such, the RSS received by the $i$-th BS from a malicious user, $y_i$, is given by
\begin{equation}\label{threat}
y_i = p_x + v_i + \omega_i,
\end{equation}
where
\begin{eqnarray}\label{definition_v}
v_i = p - 10\gamma \log_{10}\left(\frac{d_i^t}{d}\right),
\end{eqnarray}
$d_i^t$ is the Euclidean distance from the $i$-th BS to the malicious user's true location given by $d_i^t=||\mathbf{x}_t-\mathbf{x}_i||_2$, and $p_x$ is the additional boosted transmit power.
Based on \eqref{threat}, under $\Halt$ the $N$-dimensional observation vector $\mathbf{y}$, conditioned on known $p_x$ and $\mathbf{x}_t$, also follows a multivariate normal distribution, which is
\begin{equation}\label{likelihood_H1}
f\left(\mathbf{y}|p_x, \mathbf{x}_t, \Halt\right) = \mathcal{N} \left(p_x \mathbf{1}_N + \mathbf{v}, \mathbf{R}\right),
\end{equation}
where $\mathbf{1_N}$ is a $N \times 1$ vector with all elements set to unity and $\mathbf{v} = [v_1, v_2, \dots, v_N]^T$. We note that in practice $p_x$ and $\mathbf{x}_t$ are set by the malicious user.


\subsection{Decision Rule of an LVS}

We adopt the Likelihood Ratio Test (LRT) as the decision rule since it is known that the LRT achieves the highest detection rate for any given false positive rate \cite{neyman1933problem}. Therefore, the LRT can achieve the minimum Bayesain average cost and the maximum mutual information between the input and output of an LVS \cite{yan2014optimal}. The LRT decision rule is given by
\begin{equation}\label{arbitrary}
\Lambda\left(\psi (\mathbf{y})\right) \triangleq \frac{f\left(\psi (\mathbf{y})|\Halt\right)}{f\left(\psi (\mathbf{y})|\Hnull\right)} \begin{array}{c}
\overset{\Hoalt}{\geq} \\
\underset{\Honull}{<}
\end{array}%
\lambda,
\end{equation}
where $\Lambda\left(\psi (\mathbf{y})\right)$ is the test statistic, $\psi (\mathbf{y})$ is a predefined transformation of the observations $\mathbf{y}$ (to be determined in a specific LVS, \emph{e.g.}, RSS or DRSS),
$f\left(\psi (\mathbf{y})|\Halt\right)$ is the marginal likelihood (probability density function of $\psi (\mathbf{y})$) under $\Halt$, $f\left(\psi (\mathbf{y})|\Hnull\right)$ is the marginal likelihood under $\Hnull$, $\lambda$ is the threshold corresponding to $\Lambda\left(\psi (\mathbf{y})\right)$, $\Honull$ and $\Hoalt$ are the binary decisions that infer whether the user is legitimate or malicious, respectively. Given the decision rule in \eqref{arbitrary}, the false positive and detection rates of an LVS are functions of $\lambda$. The specific value of $\lambda$ can be set through minimizing the Bayesian average cost or maximizing the mutual information between the system input and output in the information-theoretic framework. The intrinsic core performance metrics of an LVS are false positive and detection rates, other potential performance metrics can be written as functions of these two rates. As such, in this work we adopt the false positive and detection rates as the performance metrics for an LVS.

\section{RSS-based Location Verification System}\label{sec_RSS}

In this section, we analyze the performance of the RSS-based LVS in terms of the false positive and detection rates, based on which we examine the impact of the spatially correlated shadowing.

\subsection{Attack Strategy of the Malicious User}

We assume that the malicious user optimizes all the parameters under his control. This assumption is adopted in most threat models. The malicious user will therefore optimize his $p_x$ and $\mathbf{x}_t$ such that the difference between $f\left(\mathbf{y}|\Hnull\right)$ and $f\left(\mathbf{y}|p_x, \mathbf{x}_t, \Halt\right)$ is minimized in order to minimize the probability to be detected. Here, we adopt the Kullback-Leibler (KL) divergence to quantify the difference between $f\left(\mathbf{y}|\Hnull\right)$ and $f\left(\mathbf{y}|p_x, \mathbf{x}_t, \Halt\right)$, which is a measure of the information loss when $f\left(\mathbf{y}|p_x, \mathbf{x}_t, \Halt\right)$ is used to approximate $f\left(\mathbf{y}|\Hnull\right)$ \cite{kullback1951on}.

Based on \eqref{likelihood_H0} and \eqref{likelihood_H1}, the KL divergence between $f\left(\mathbf{y}|\Hnull\right)$ and $f\left(\mathbf{y}|p_x, \mathbf{x}_t, \Halt\right)$ is given by
\begin{equation}\label{KL_divergence}
\begin{split}
\phi(p_x, \mathbf{x}_t) &= D_{KL}\left[f\left(\mathbf{y}| \Hnull\right)||f\left(\mathbf{y}|p_x, \mathbf{x}_t \Halt\right)\right]\\
&= \int_{-\infty}^{\infty} \ln \frac{f\left(\mathbf{y}|\Hnull\right)}{f\left(\mathbf{y}|p_x, \mathbf{x}_t, \Halt\right)} f\left(\mathbf{y}|\Hnull\right) d{\mathbf{y}}\\
&= \frac{1}{2}(p_x \mathbf{1}_N + \mathbf{v} - \mathbf{u})^T \mathbf{R}^{-1}(p_x \mathbf{1}_N + \mathbf{v} - \mathbf{u}).
\end{split}
\end{equation}
Then, the optimal values of $p_x$ and $\mathbf{x}_t$ that minimize $\phi(p_x, \mathbf{x}_t)$ can be obtained through
\begin{align}\label{optimal_two}
(p_x^{\ast}, \mathbf{x}_t^{\ast}) = \argmin_{p_x, ||\mathbf{x}_t-\mathbf{x}_c||_2 \geq r} \phi(p_x, \mathbf{x}_t).
\end{align}
The closed-form expressions for $p_x^{\ast}$ and $\mathbf{x}_t^{\ast}$ are intractable, but they can be obtained through numerical search. In order to simplify the numerical search, we first derive the optimal value of $p_x$ for a given $\mathbf{x}_t$, which is presented in the following lemma.

\begin{lemma}\label{lemma1}
The optimal value of $p_x$ that minimizes $\phi(p_x, \mathbf{x}_t)$ for any given $\mathbf{x}_t$ is
\begin{equation}\label{optimal_px}
p_x^o(\mathbf{x}_t) = \frac{(\mathbf{u} - \mathbf{v})^T \mathbf{R}^{-1}\mathbf{1}_N}
{\mathbf{1}_N^T\mathbf{R}^{-1}\mathbf{1}_N}.
\end{equation}
\end{lemma}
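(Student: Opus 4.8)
The plan is to treat $\phi(p_x,\mathbf{x}_t)$ as a scalar quadratic function of the single real variable $p_x$, with $\mathbf{x}_t$ (and hence $\mathbf{v}$) held fixed. First I would introduce the shorthand $\mathbf{w} \triangleq \mathbf{v}-\mathbf{u}$ and expand the quadratic form in \eqref{KL_divergence} as
\begin{equation}
\phi = \tfrac{1}{2}\big( p_x^2\,\mathbf{1}_N^T\mathbf{R}^{-1}\mathbf{1}_N + 2p_x\,\mathbf{1}_N^T\mathbf{R}^{-1}\mathbf{w} + \mathbf{w}^T\mathbf{R}^{-1}\mathbf{w}\big),
\end{equation}
where the symmetry of $\mathbf{R}^{-1}$ (inherited from the symmetry of the covariance matrix $\mathbf{R}$) has been used to merge the two cross terms $p_x\,\mathbf{1}_N^T\mathbf{R}^{-1}\mathbf{w}$ and $p_x\,\mathbf{w}^T\mathbf{R}^{-1}\mathbf{1}_N$ into one.

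Next, since $\mathbf{R}$ is a covariance matrix it is positive definite, so $\mathbf{R}^{-1}$ is positive definite as well; consequently the leading coefficient $\mathbf{1}_N^T\mathbf{R}^{-1}\mathbf{1}_N$ is strictly positive (because $\mathbf{1}_N\neq\mathbf{0}$), and $\phi$ is a strictly convex parabola in $p_x$. Its unique global minimizer is therefore obtained from the first-order stationarity condition $\partial\phi/\partial p_x = 0$, namely
\begin{equation}
p_x\,\mathbf{1}_N^T\mathbf{R}^{-1}\mathbf{1}_N + \mathbf{1}_N^T\mathbf{R}^{-1}\mathbf{w} = 0.
\end{equation}
Solving for $p_x$ and rewriting the scalar $\mathbf{1}_N^T\mathbf{R}^{-1}\mathbf{w} = -(\mathbf{u}-\mathbf{v})^T\mathbf{R}^{-1}\mathbf{1}_N$ (once more via symmetry of $\mathbf{R}^{-1}$ together with scalar transposition) then yields exactly the expression \eqref{optimal_px} for $p_x^o(\mathbf{x}_t)$.

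I do not expect any substantive obstacle here: the result is a standard completion-of-the-square / weighted least-squares computation. The only points that require care are (i) invoking positive-definiteness of $\mathbf{R}^{-1}$ to guarantee that the stationary point is the unique global minimum rather than a maximum or saddle, and (ii) correctly handling the symmetry and scalar-transpose manipulations so that the minimizer appears in precisely the form $(\mathbf{u}-\mathbf{v})^T\mathbf{R}^{-1}\mathbf{1}_N / (\mathbf{1}_N^T\mathbf{R}^{-1}\mathbf{1}_N)$ stated in the lemma. As a consistency check I would verify the second-order condition $\partial^2\phi/\partial p_x^2 = \mathbf{1}_N^T\mathbf{R}^{-1}\mathbf{1}_N>0$, confirming a minimum, and I would note that $p_x^o(\mathbf{x}_t)$ depends on $\mathbf{x}_t$ only through $\mathbf{v}$, which is precisely what reduces the joint optimization \eqref{optimal_two} to a tractable search over $\mathbf{x}_t$ alone.
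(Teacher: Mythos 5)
Your proposal is correct and follows essentially the same route as the paper: both exploit the fact that $\phi(p_x,\mathbf{x}_t)$ is a quadratic in $p_x$, verify strict convexity via $\partial^2\phi/\partial p_x^2 = \mathbf{1}_N^T\mathbf{R}^{-1}\mathbf{1}_N>0$ (positive definiteness of $\mathbf{R}$), and solve the first-order stationarity condition, with your explicit expansion of the quadratic form being only a cosmetic variant of the paper's direct differentiation of $\frac{1}{2}(p_x\mathbf{1}_N+\mathbf{v}-\mathbf{u})^T\mathbf{R}^{-1}(p_x\mathbf{1}_N+\mathbf{v}-\mathbf{u})$. One small caution: avoid the shorthand $\mathbf{w}\triangleq\mathbf{v}-\mathbf{u}$, since the paper already reserves $\mathbf{w}$ for the distinct quantity in \eqref{definition_w}.
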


\begin{IEEEproof}
The first derivative of $\phi(p_x, \mathbf{x}_t)$ with respect to $p_x$ is derived as
\begin{equation}\label{first_px}
\begin{split}
\frac{\partial \phi(p_x, \mathbf{x}_t)}{\partial p_x} &= \frac{\partial \phi(p_x, \mathbf{x}_t)}{\partial \left(p_x \mathbf{1}_N\right)}\frac{\partial \left(p_x \mathbf{1}_N\right)}{\partial p_x}\\
&=(p_x \mathbf{1}_N + \mathbf{v} - \mathbf{u})^T \mathbf{R}^{-1}\frac{\partial \left(p_x \mathbf{1}_N\right)}{\partial p_x}\\
&=(p_x \mathbf{1}_N + \mathbf{v} - \mathbf{u})^T \mathbf{R}^{-1}\mathbf{1}_N.
\end{split}
\end{equation}
Following \eqref{first_px}, the second derivative of $\phi(p_x, \mathbf{x}_t)$ with respect to $p_x$ is derived as
\begin{align}\label{second_px}
\frac{\partial^2 \phi(p_x, \mathbf{x}_t)}{\partial^2 p_x} = \mathbf{1}_N^T \mathbf{R}^{-1}\mathbf{1}_N.
\end{align}
Since $\mathbf{R}$ is a positive-definite symmetric matrix, as per \eqref{second_px} we have ${\partial^2 \phi(p_x, \mathbf{x}_t)}/{\partial^2 p_x} >0$, which indicates that $\phi(p_x, \mathbf{x}_t)$ is a convex function of $p_x$. As such, setting ${\partial \phi(p_x, \mathbf{x}_t)}/{\partial p_x}=0$, we obtain the desired result in \eqref{optimal_px} after some algebraic manipulations.
\end{IEEEproof}

From Lemma~\ref{lemma1}, we note that the malicious user optimizes his transmit power, \emph{i.e.}, $p_x = p_x^o(\mathbf{x}_t)$, to compensate the path-loss difference between his claimed location and his true location. We also note that $p_x^o(\mathbf{x}_t)$ is a function of $\mathbf{R}$ under spatial correlated shadowing. This is different from the scenario with uncorrelated shadowing, where $p_x^o(\mathbf{x}_t)$ is independent of the shadowing noise \cite{yan2014optimal}. Substituting $p_x^o(\mathbf{x}_t)$ into \eqref{KL_divergence}, we have
\begin{equation}\label{KL_divergence_min}
\begin{split}
\phi(p_x^o(\mathbf{x}_t), \mathbf{x}_t) &= \frac{1}{2}(\mathbf{w}-\mathbf{u})^T \mathbf{R}^{-1}(\mathbf{w}-\mathbf{u}),
\end{split}
\end{equation}
where
\begin{align}\label{definition_w}
\mathbf{w} = \frac{(\mathbf{u} - \mathbf{v})^T \mathbf{R}^{-1}\mathbf{1}_N}
{\mathbf{1}_N^T\mathbf{R}^{-1}\mathbf{1}_N}\mathbf{1}_N + \mathbf{v}.
\end{align}
Since we have shown that $\phi(p_x, \mathbf{x}_t)$ is a convex function of $p_x$ in \eqref{second_px}, $\mathbf{x}_t^{\ast}$ is given by
\begin{align}\label{optimal_true}
\mathbf{x}_t^{\ast} = \argmin_{||\mathbf{x}_t-\mathbf{x}_c||_2 \geq r} \phi(p_x^o(\mathbf{x}_t), \mathbf{x}_t).
\end{align}
Substituting $\mathbf{x}_t^{\ast}$ into $p_x^o(\mathbf{x}_t)$, we obtain $p_x^{\ast} = p_x^o(\mathbf{x}_t^{\ast})$. We note that Lemma~1 is of importance since it reduces a three-dimension numerical search in \eqref{optimal_two} into a two-dimension numerical search in \eqref{optimal_true}.


Substituting $p_x^{\ast}$ and $\mathbf{x}_t^{\ast}$ into \eqref{threat}, the RSS received by the $i$-th BS from a malicious user can be written as
\begin{equation}\label{threat_general}
\mathbf{y} = \mathbf{w}^{\ast} + \bm{\omega},
\end{equation}
where
\begin{equation}\label{w_definition}
\mathbf{w}^{\ast} = \frac{(\mathbf{u} - \mathbf{v}^{\ast})^T \mathbf{R}^{-1}\mathbf{1}_N}
{\mathbf{1}_N^T\mathbf{R}^{-1}\mathbf{1}_N}\mathbf{1}_N + \mathbf{v}{\ast},
\end{equation}
$\mathbf{v}^{\ast}$ is obtained by substituting $\mathbf{x}_t^{\ast}$ into $\mathbf{v}$, and $\bm{\omega}~ =~[\omega_1, \omega_2, \dots, \omega_N]$.
Based on \eqref{threat_general}, the likelihood function under $\Halt$ conditioned on $p_x^{\ast}$ and $\mathbf{x}_t^{\ast}$ can be written as
\begin{equation}\label{likelihood_H1_FFA}
f\left(\mathbf{y}|p_x^{\ast}, \mathbf{x}_t^{\ast}, \Halt\right) = \mathcal{N} (\mathbf{w}^{\ast}, \mathbf{R}).
\end{equation}

\subsection{Performance of the RSS-based LVS}

In some practical cases, the malicious user may not have the freedom to optimize his true location, \emph{e.g.}, if the malicious user is physically limited to be inside a building. However, the malicious user can still optimize his transmit power as per his true location. As such, without losing generality, we first analyze the performance of the RSS-based LVS for $p_x = p_x^o(\mathbf{x}_t)$, and then present the performance of the RSS-based LVS for $p_x = p_x^{\ast}$ and $\mathbf{x}_t = \mathbf{x}_t^{\ast}$ as a special case.

Following \eqref{arbitrary}, the specific LRT decision rule of the RSS-based LVS for $p_x = p_x^o(\mathbf{x}_t)$ is given by
\begin{equation}\label{RSS_known_ori}
\Lambda^o\left(\mathbf{y}\right) \triangleq \frac{f\left(\mathbf{y}|p_x^o(\mathbf{x}_t), \mathbf{x}_t, \Halt\right)}{f\left(\mathbf{y}|\Hnull\right)}  \begin{array}{c}
\overset{\Hoalt}{\geq} \\
\underset{\Honull}{<}
\end{array}%
\lambda_R^o,
\end{equation}
where $\Lambda^o\left(\mathbf{y}\right)$ is the likelihood ratio of $\mathbf{y}$ for $p_x = p_x^o(\mathbf{x}_t)$, $f\left(\mathbf{y}|p_x^o(\mathbf{x}_t), \mathbf{x}_t, \Halt\right) = \mathcal{N} (\mathbf{w}, \mathbf{R})$, and $\lambda_R^o$ is a threshold for $\Lambda^o\left(\mathbf{y}\right)$. Substituting \eqref{likelihood_H0} and \eqref{likelihood_H1_FFA} into \eqref{RSS_known_ori}, we obtain $\Lambda^o\left(\mathbf{y}\right)$ in the $\log$ domain as
\begin{equation*}
\begin{split}
\ln \Lambda^o\left(\mathbf{y}\right) &\!=\! \frac{1}{2}(\mathbf{y}\!\!-\!\!\mathbf{u})^T \mathbf{R}^{\!-\!1}(\mathbf{y}\!\!-\!\!\mathbf{u})\!\!-\!\! \frac{1}{2}(\mathbf{y}\!\!-\!\!\mathbf{w})^T \mathbf{R}^{\!-\!1}(\mathbf{y}\!\!-\!\!\mathbf{w})\\
&= \left(\mathbf{w} \!\!-\!\! \mathbf{u}\right)^T \mathbf{R}^{-1} \mathbf{y} \!\!-\!\! \frac{1}{2}\left(\mathbf{w} \!\!-\!\! \mathbf{u}\right)^T \mathbf{R}^{-1} \left(\mathbf{w} \!+\! \mathbf{u}\right).
\end{split}
\end{equation*}
As such, for the theorem to follow, we can rewrite the decision rule in \eqref{RSS_known_ori} as the following format
\begin{equation}\label{decision_RSS}
\mathbb{T}(\mathbf{y}) \begin{array}{c}
\overset{\Hoalt}{\geq} \\
\underset{\Honull}{<}
\end{array}%
\Gamma_R,
\end{equation}
where $\mathbb{T}(\mathbf{y})$ is the test statistic given by
\begin{equation}\label{statistic_RSS}
\mathbb{T}(\mathbf{y}) \triangleq \left(\mathbf{w} - \mathbf{u}\right)^T \mathbf{R}^{-1} \mathbf{y},
\end{equation}
and $\Gamma_R$ is the threshold for $\mathbb{T}(\mathbf{y})$ given by
\begin{equation}\label{threshold_RSS}
\Gamma_R \triangleq \ln \lambda_R^o + \frac{1}{2}\left(\mathbf{w} - \mathbf{u}\right)^T \mathbf{R}^{-1} \left(\mathbf{w} + \mathbf{u}\right).
\end{equation}

We then derive the false positive rate, $\alpha_R^{o}$, and detection rate, $\beta_R^{o}$, of the RSS-based LVS for $p_x = p_x^o(\mathbf{x}_t)$ in the following theorem.

\begin{theorem}\label{theorem1}
\emph{For $p_x = p_x^o(\mathbf{x}_t)$, the false positive and detection rates of the RSS-based LVS are
\begin{align}
\alpha_R^{o}(\mathbf{x}_t) &= \mathcal{Q}\left[\frac{\Gamma_R - \left(\mathbf{w} \!\!-\!\! \mathbf{u}\right)^T \mathbf{R}^{\!-\!1} \mathbf{u}}{\sqrt{\left(\mathbf{w} \!\!-\!\! \mathbf{u}\right)^T \mathbf{R}^{\!-\!1} \left(\mathbf{w} \!\!-\!\! \mathbf{u}\right)}}\right]\notag \\
&= \mathcal{Q}\left[\frac{\ln \lambda_R^o + \frac{1}{2}\left(\mathbf{w} \!\!-\!\! \mathbf{u}\right)^T \mathbf{R}^{\!-\!1} \left(\mathbf{w}\!\!-\!\!\mathbf{u}\right)}{\sqrt{\left(\mathbf{w} \!\!-\!\! \mathbf{u}\right)^T \mathbf{R}^{\!-\!1} \left(\mathbf{w} \!\!-\!\! \mathbf{u}\right)}}\right],\label{alpha_R_xt} \\
\beta_R^{o}(\mathbf{x}_t) &= \mathcal{Q}\left[\frac{\Gamma_R - \left(\mathbf{w} \!\!-\!\! \mathbf{u}\right)^T \mathbf{R}^{\!-\!1} \mathbf{w}}{\sqrt{\left(\mathbf{w} \!\!-\!\! \mathbf{u}\right)^T \mathbf{R}^{\!-\!1} \left(\mathbf{w} \!\!-\!\! \mathbf{u}\right)}}\right]\notag \\
&= \mathcal{Q}\left[\frac{\ln \lambda_R^o - \frac{1}{2}\left(\mathbf{w} \!\!-\!\! \mathbf{u}\right)^T \mathbf{R}^{\!-\!1} \left(\mathbf{w}\!\!-\!\!\mathbf{u}\right)}{\sqrt{\left(\mathbf{w} \!\!-\!\! \mathbf{u}\right)^T \mathbf{R}^{\!-\!1} \left(\mathbf{w} \!\!-\!\! \mathbf{u}\right)}}\right],\label{beta_R_xt}
\end{align}
where $\mathcal{Q}[x] = \frac{1}{\sqrt{2 \pi}}\int_x^{\infty} \exp (-t^2/2) dt$.}
\end{theorem}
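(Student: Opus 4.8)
The plan is to exploit the fact that the test statistic $\mathbb{T}(\mathbf{y}) = (\mathbf{w} - \mathbf{u})^T \mathbf{R}^{-1} \mathbf{y}$ defined in \eqref{statistic_RSS} is a \emph{linear} functional of $\mathbf{y}$. Since $\mathbf{y}$ is multivariate Gaussian under each hypothesis --- $\mathcal{N}(\mathbf{u}, \mathbf{R})$ under $\Hnull$ by \eqref{likelihood_H0} and $\mathcal{N}(\mathbf{w}, \mathbf{R})$ under $\Halt$ by \eqref{likelihood_H1_FFA} --- a fixed linear combination of its entries is scalar Gaussian under each hypothesis. Hence the whole derivation reduces to tracking two means and one variance, after which the rates are simply Gaussian tail probabilities.

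First I would compute the first two moments of $\mathbb{T}(\mathbf{y})$ under each hypothesis. Writing $\mathbf{a} \triangleq \mathbf{R}^{-1}(\mathbf{w}-\mathbf{u})$ so that $\mathbb{T}(\mathbf{y}) = \mathbf{a}^T \mathbf{y}$, under $\Hnull$ the mean is $\mathbf{a}^T \mathbf{u} = (\mathbf{w}-\mathbf{u})^T \mathbf{R}^{-1}\mathbf{u}$ and under $\Halt$ it is $\mathbf{a}^T \mathbf{w} = (\mathbf{w}-\mathbf{u})^T \mathbf{R}^{-1}\mathbf{w}$. The key simplifying feature is that the covariance is $\mathbf{R}$ under \emph{both} hypotheses, so the variance is identical in the two cases: $\mathbf{a}^T \mathbf{R}\, \mathbf{a} = (\mathbf{w}-\mathbf{u})^T \mathbf{R}^{-1}\mathbf{R}\mathbf{R}^{-1}(\mathbf{w}-\mathbf{u}) = (\mathbf{w}-\mathbf{u})^T \mathbf{R}^{-1}(\mathbf{w}-\mathbf{u})$, where I use the symmetry of $\mathbf{R}^{-1}$ and $\mathbf{R}^{-1}\mathbf{R}\mathbf{R}^{-1} = \mathbf{R}^{-1}$. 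This shared variance is what makes the final $\mathcal{Q}$-function expressions share a common denominator.

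Given these moments, the false positive and detection rates are the tail probabilities $\alpha_R^o = \Pr(\mathbb{T}(\mathbf{y}) \geq \Gamma_R \mid \Hnull)$ and $\beta_R^o = \Pr(\mathbb{T}(\mathbf{y}) \geq \Gamma_R \mid \Halt)$ of a scalar Gaussian past the threshold $\Gamma_R$ from \eqref{decision_RSS}. Standardizing each Gaussian and invoking the definition $\mathcal{Q}[x] = \frac{1}{\sqrt{2\pi}}\int_x^{\infty} e^{-t^2/2}\,dt$ immediately yields the first equality in each of \eqref{alpha_R_xt} and \eqref{beta_R_xt}.

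Finally I would recover the second (threshold-free) equalities by substituting the explicit $\Gamma_R = \ln\lambda_R^o + \frac{1}{2}(\mathbf{w}-\mathbf{u})^T \mathbf{R}^{-1}(\mathbf{w}+\mathbf{u})$ from \eqref{threshold_RSS} into the numerators. For $\alpha_R^o$ the term $-(\mathbf{w}-\mathbf{u})^T\mathbf{R}^{-1}\mathbf{u}$ combines with $\frac{1}{2}(\mathbf{w}-\mathbf{u})^T\mathbf{R}^{-1}(\mathbf{w}+\mathbf{u})$ to leave $+\frac{1}{2}(\mathbf{w}-\mathbf{u})^T\mathbf{R}^{-1}(\mathbf{w}-\mathbf{u})$, while for $\beta_R^o$ the term $-(\mathbf{w}-\mathbf{u})^T\mathbf{R}^{-1}\mathbf{w}$ leaves $-\frac{1}{2}(\mathbf{w}-\mathbf{u})^T\mathbf{R}^{-1}(\mathbf{w}-\mathbf{u})$. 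There is no genuine obstacle here: the conceptual content is entirely contained in the observation that a linear functional of a Gaussian vector is Gaussian with equal variances under the two hypotheses, and the only thing requiring care is the bookkeeping in this last algebraic simplification, ensuring the quadratic form $(\mathbf{w}-\mathbf{u})^T\mathbf{R}^{-1}(\mathbf{w}-\mathbf{u})$ is handled consistently in numerator and denominator.
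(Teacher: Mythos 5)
Your proposal is correct and follows essentially the same route as the paper's own proof: both derive the Gaussian distributions of the linear test statistic $\mathbb{T}(\mathbf{y})$ under $\Hnull$ and $\Halt$ (with the same variance $(\mathbf{w}-\mathbf{u})^T\mathbf{R}^{-1}(\mathbf{w}-\mathbf{u})$ under both hypotheses), express the two rates as Gaussian tail probabilities past $\Gamma_R$, and standardize to $\mathcal{Q}$-functions. Your explicit substitution of $\Gamma_R$ from \eqref{threshold_RSS} simply spells out what the paper compresses into ``after some algebraic manipulations.''
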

\begin{proof}
Using \eqref{statistic_RSS}, the distributions of $\mathbb{T}(\mathbf{y})$ under $\Hnull$ and $\Halt$ are derived as follows
\begin{align}
&\mathbb{T}(\mathbf{y}) |\Hnull \notag \\
&\sim
\mathcal{N}\left(\left(\mathbf{w} \!\!-\!\! \mathbf{u}\right)^T \mathbf{R}^{\!-\!1} \mathbf{u}, \left(\mathbf{w} \!\!-\!\! \mathbf{u}\right)^T \mathbf{R}^{\!-\!1} \mathbf{R} \left(\left(\mathbf{w} \!\!-\!\! \mathbf{u}\right)^T \mathbf{R}^{\!-\!1}\right)^T \right)\notag\\
&=\mathcal{N}\left(\left(\mathbf{w} \!\!-\!\! \mathbf{u}\right)^T \mathbf{R}^{\!-\!1} \mathbf{u}, \left(\mathbf{w} \!\!-\!\! \mathbf{u}\right)^T \mathbf{R}^{\!-\!1} \left(\mathbf{w} \!\!-\!\! \mathbf{u}\right)\right),\label{R_0}\\
&\mathbb{T}(\mathbf{y}) |\Halt \notag \\
&\sim
\mathcal{N}\left(\left(\mathbf{w} \!\!-\!\! \mathbf{u}\right)^T \mathbf{R}^{\!-\!1} \mathbf{w}, \left(\mathbf{w} \!\!-\!\! \mathbf{u}\right)^T \mathbf{R}^{\!-\!1} \mathbf{R} \left(\left(\mathbf{w} \!\!-\!\! \mathbf{u}\right)^T \mathbf{R}^{\!-\!1}\right)^T \right)\notag\\
&=\mathcal{N}\left(\left(\mathbf{w} \!\!-\!\! \mathbf{u}\right)^T \mathbf{R}^{\!-\!1} \mathbf{w}, \left(\mathbf{w} \!\!-\!\! \mathbf{u}\right)^T \mathbf{R}^{\!-\!1} \left(\mathbf{w} \!\!-\!\! \mathbf{u}\right)\right).\label{R_1}
\end{align}
As per the decision rule in \eqref{decision_RSS}, the false positive and detection rates are given by
\begin{align}
\alpha_R^{o}(\mathbf{x}_t) &\triangleq \Pr\left(\mathbb{T}(\mathbf{y}) \geq \Gamma_R |\Hnull \right),\label{false_R}\\
\beta_R^{o}(\mathbf{x}_t) &\triangleq \Pr\left(\mathbb{T}(\mathbf{y}) \geq \Gamma_R |\Halt \right).\label{detection_R}
\end{align}
Substituting \eqref{R_0} and \eqref{R_1} into \eqref{false_R} and \eqref{detection_R}, respectively, we obtain the results in \eqref{alpha_R} and \eqref{beta_R} after some algebraic manipulations.
\end{proof}

For $p_x = p_x^{\ast}$ and $\mathbf{x}_t = \mathbf{x}_t^{\ast}$, the LRT decision rule of the RSS-based LVS is given by
\begin{equation}\label{RSS_known_worst}
\Lambda^{\ast}\left(\mathbf{y}\right) \triangleq \frac{f\left(\mathbf{y}|p_x^{\ast}, \mathbf{x}_t^{\ast}, \Halt\right)}{f\left(\mathbf{y}|\Hnull\right)}  \begin{array}{c}
\overset{\Hoalt}{\geq} \\
\underset{\Honull}{<}
\end{array}%
\lambda_R^{\ast},
\end{equation}
where $\Lambda^{\ast}\left(\mathbf{y}\right)$ is the likelihood ratio of $\mathbf{y}$ for $p_x = p_x^{\ast}$ and $\mathbf{x}_t = \mathbf{x}_t^{\ast}$ and $\lambda_R^{\ast}$ is a threshold for $\Lambda^{\ast}\left(\mathbf{y}\right)$.
Following Theorem~1, the false positive and detection rates of the RSS-based LVS for $p_x = p_x^{\ast}$ and $\mathbf{x}_t = \mathbf{x}_t^{\ast}$ are given by
\begin{align}
\alpha_R^{\ast}&= \mathcal{Q}\left[\frac{\ln \lambda_R^{\ast} + \frac{1}{2}\left(\mathbf{w}^{\ast} \!\!-\!\! \mathbf{u}\right)^T \mathbf{R}^{\!-\!1} \left(\mathbf{w}^{\ast}\!\!-\!\!\mathbf{u}\right)}{\sqrt{\left(\mathbf{w}^{\ast} \!\!-\!\! \mathbf{u}\right)^T \mathbf{R}^{\!-\!1} \left(\mathbf{w}^{\ast} \!\!-\!\! \mathbf{u}\right)}}\right],\label{alpha_R} \\
\beta_R^{\ast}&= \mathcal{Q}\left[\frac{\ln \lambda_R^{\ast} - \frac{1}{2}\left(\mathbf{w}^{\ast} \!\!-\!\! \mathbf{u}\right)^T \mathbf{R}^{\!-\!1} \left(\mathbf{w}^{\ast}\!\!-\!\!\mathbf{u}\right)}{\sqrt{\left(\mathbf{w}^{\ast} \!\!-\!\! \mathbf{u}\right)^T \mathbf{R}^{\!-\!1} \left(\mathbf{w}^{\ast} \!\!-\!\! \mathbf{u}\right)}}\right].\label{beta_R}
\end{align}

We note that the results provided in \eqref{alpha_R_xt} and \eqref{beta_R_xt} are based on an arbitrary true location $\mathbf{x}_t$ of the malicious user, which are more general than that provided in \eqref{alpha_R} and \eqref{beta_R}. That is, $\alpha_R^{\ast} = \alpha_R^o(\mathbf{x}_t^{\ast})$ and $\beta_R^{\ast} = \beta_R^o(\mathbf{x}_t^{\ast})$. By using \eqref{alpha_R_xt} and \eqref{beta_R_xt}, we can compare the performance of the RSS-based LVS with that of the DRSS-based LVS in a general scenario.

\section{DRSS-based Location Verification System}\label{sec_DRSS}

In this section, we analyze the detection performance of the DRSS-based LVS under spatially correlated shadowing. We also provide an analytical comparison between the RSS-based LVS and the DRSS-based LVS.

\subsection{DRSS Observations}\label{procedure_diff}

We obtain $(N-1)$ basic DRSS observations from $N$ RSS observations by subtracting the $N$-th RSS observation from all other $(N-1)$ RSS observations. As such, the $m$-th DRSS value under $\Hnull$ is given by
\begin{equation}\label{DRSS_observation0}
\Delta y_m = \Delta u_m + \Delta \omega_m, ~~m = 1, 2, \dots, N-1,
\end{equation}
where $\Delta u_m = u_m - u_N$, and $\Delta \omega_m = \omega_m - \omega_N$. We note that $\Delta \omega_m$ is Gaussian with zero mean and variance $2(\sigma_{dB}^2-R_{mN})$. We denote the $(N-1)\times(N-1)$ covariance matrix of the $(N-1)$-dimensional DRSS vector $\mathbf{\Delta y} = [\Delta y_1, \dots, \Delta y_{N-1}]^T$ as $\mathbf{D}$, whose $(m,n)$-th element is given by ($n = 1,2,\dots,N-1$)
\begin{align}\label{D_definition}
D_{mn}  = R_{NN} + R_{mn} - R_{mN}-R_{nN}.
\end{align}
As such, $\mathbf{\Delta y}$ under $\Hnull$ follows a multivariate normal distribution, which is given by
\begin{equation}\label{likelihood_DRSS_0}
f\left(\mathbf{\Delta y}|\Hnull\right) = \mathcal{N} (\mathbf{\Delta u}, \mathbf{D}),
\end{equation}
where $\mathbf{\Delta u} = [\Delta u_1, \dots, \Delta u_{N-1}]^T$ is the mean vector.

Likewise, the $m$-th DRSS value under $\Halt$ is
\begin{equation}\label{DRSS_observation1}
\Delta y_m = \Delta v_m + \Delta \omega_m,
\end{equation}
where $\Delta v_m = v_m - v_N$. Noting $\mathbf{\Delta v} = [\Delta v_1, \dots, \Delta v_{N-1}]^T$, $\mathbf{\Delta y}$ under $\Halt$ follows another multivariate normal distribution, which is given by
\begin{equation}\label{likelihood_DRSS_1}
f\left(\mathbf{\Delta y}|\mathbf{x}_t, \Halt\right) = \mathcal{N} (\mathbf{\Delta v}, \mathbf{D}).
\end{equation}

\subsection{Attack Strategy of the Malicious User}

As per \eqref{definition_u} and \eqref{definition_v}, we know that both $p$ and $d$ are constant at all elements of $\mathbf{u}$ and $\mathbf{v}$. As such, based on \eqref{DRSS_observation0} and \eqref{DRSS_observation1} we can see that $\mathbf{\Delta y}$ under both $\Hnull$ and $\Halt$ are independent of $p$ and $d$, and therefore both $f\left(\mathbf{\Delta y}|\Hnull\right)$ and $f\left(\mathbf{\Delta y}|\mathbf{x}_t, \Halt\right)$ are independent of $p$ and $d$. Therefore, in the DRSS-based LVS the malicious user does not need to adjust his transmit power in order to minimize the probability to be detected. In the DRSS-based LVS, the malicious user only has to optimize his true location through minimizing the KL-divergence between $f\left(\mathbf{\Delta y}|\Hnull\right)$ and $f\left(\mathbf{\Delta y}|\mathbf{x}_t, \Halt\right)$, which is given by
\begin{equation}\label{KL_divergence_DRSS}
\begin{split}
\varphi(\mathbf{x}_t) &= D_{KL}\left[f\left(\mathbf{\Delta y}| \Hnull\right)||f\left(\mathbf{\Delta y}|\mathbf{x}_t, \Halt\right)\right]\\
&= \int_{-\infty}^{\infty} \ln \frac{f\left(\mathbf{\Delta y}| \Hnull\right)}{f\left(\mathbf{\Delta y}|\mathbf{x}_t, \Halt\right)} f\left(\mathbf{\Delta y}|\Hnull\right) d{\mathbf{\Delta y}}\\
&= \frac{1}{2}(\mathbf{\Delta v} - \mathbf{\Delta u})^T \mathbf{D}^{-1}(\mathbf{\Delta v} - \mathbf{\Delta u}).
\end{split}
\end{equation}
The optimal value of $\mathbf{x}_t$ for the malicious user in the DRSS-based LVS can be obtained through
\begin{align}\label{optimal_true_DRSS}
\mathbf{x}_t^{\dag} = \argmin_{||\mathbf{x}_t-\mathbf{x}_c||_2 \geq r} \varphi(\mathbf{x}_t).
\end{align}
The likelihood function under $\Halt$ for $\mathbf{x}_t =\mathbf{x}_t^{\dag}$ is given by
\begin{equation}\label{likelihood_DRSS_1_w}
f\left(\mathbf{\Delta y}|\mathbf{x}_t^{\dag}, \Halt\right) = \mathcal{N} (\mathbf{\Delta v}^{\dag}, \mathbf{D}),
\end{equation}
where $\Delta v_m^{\dag} = v_m^{\dag}-v_N^{\dag}$ and $\mathbf{v}^{\dag}$ is obtained by substituting $\mathbf{x}_t^{\dag}$ into $\mathbf{v}$.

\subsection{Performance of the DRSS-based LVS}


In this subsection, we again consider the case where the true location of the malicious user is physically constrained. Specifically, we first analyze the performance of the DRSS-based LVS for an arbitrary $\mathbf{x}_t$, and then present the performance of the DRSS-based LVS for $\mathbf{x}_t = \mathbf{x}_t^{\dag}$ as a special case in this subsection.

Following \eqref{arbitrary}, the specific LRT decision rule of the DRSS-based LVS for any $\mathbf{x}_t$ is given by
\begin{equation}\label{DRSS_ori}
\Lambda\left(\mathbf{\Delta y}\right) \triangleq \frac{f\left(\mathbf{\Delta y}|\mathbf{x}_t, \Halt\right)}
{f\left(\mathbf{\Delta y}|\Hnull\right)}  \begin{array}{c}
\overset{\Hoalt}{\geq} \\
\underset{\Honull}{<}
\end{array}%
\lambda_D,
\end{equation}
where $\Lambda\left(\mathbf{\Delta y}\right)$ is the likelihood ratio of $\mathbf{\Delta y}$ and $\lambda_D$ is a threshold for $\Lambda\left(\mathbf{\Delta y}\right)$.
Substituting \eqref{likelihood_DRSS_0} and \eqref{likelihood_DRSS_1_w} into \eqref{DRSS_ori}, we
obtain $\Lambda\left(\mathbf{\Delta y}\right)$ in $\log$ domain as
\begin{equation*}
\begin{split}
\ln \Lambda\left(\mathbf{\Delta y}\right) &= \frac{1}{2}(\mathbf{\Delta y}-\mathbf{\Delta u})^T \bm{D}^{-1}(\mathbf{\Delta y}-\mathbf{\Delta u})\\
&~~~~- \frac{1}{2}(\mathbf{\Delta y}-\mathbf{\Delta v})^T \bm{D}^{-1}(\mathbf{\Delta y}-\mathbf{\Delta v})\\
&= (\mathbf{\Delta v}-\mathbf{\Delta u})^T \bm{D}^{-1} \mathbf{\Delta y}\\
 &~~~~- \frac{1}{2}(\mathbf{\Delta v}-\mathbf{\Delta u})^T \bm{D}^{-1} (\mathbf{\Delta v}+\mathbf{\Delta u}).
\end{split}
\end{equation*}
Then, we can rewrite the decision rule given in \eqref{DRSS_ori} as
\begin{equation}\label{decision_DRSS}
\mathbb{T}(\mathbf{\Delta y}) \begin{array}{c}
\overset{\Honull}{\geq} \\
\underset{\Hoalt}{<}
\end{array}%
\Gamma_D,
\end{equation}
where $\mathbb{T}(\mathbf{\Delta y})$ is the test statistic given by
\begin{equation}\label{statistic_DRSS}
\mathbb{T}(\mathbf{\Delta y}) \triangleq (\mathbf{\Delta v}-\mathbf{\Delta u})^T \bm{D}^{-1} \mathbf{\Delta y},
\end{equation}
and $\Gamma_D$ is the threshold for $\mathbb{T}(\mathbf{\Delta y})$ given by
\begin{equation}\label{threshold_DRSS}
\Gamma_D \triangleq \ln \lambda_D + \frac{1}{2}(\mathbf{\Delta v}-\mathbf{\Delta u})^T \bm{D}^{-1}(\mathbf{\Delta v}+\mathbf{\Delta u}).
\end{equation}
We then derive the false positive rate, $\alpha_D(\mathbf{x}_t)$, and the detection rate, $\beta_D(\mathbf{x}_t)$, of the DRSS-based LVS for any $\mathbf{x}_t$ in the following theorem.

\begin{theorem}\label{theorem2}
\emph{The false positive and detection rates of the DRSS-based LVS for any $\mathbf{x}_t$ are given by
\begin{align}
\alpha_D(\mathbf{x}_t) &= \mathcal{Q}\left[\frac{\Gamma_D - \left(\mathbf{\Delta v} \!\!-\!\! \mathbf{\Delta u}\right)^T \mathbf{D}^{\!-\!1} \mathbf{\Delta u}}{\sqrt{\left(\mathbf{\Delta v} \!\!-\!\! \mathbf{\Delta u}\right)^T \mathbf{D}^{\!-\!1} \left(\mathbf{\Delta v} \!\!-\!\! \mathbf{\Delta u}\right)}}\right]\notag \\
&= \mathcal{Q}\left[\frac{\ln \lambda_D + \frac{1}{2}\left(\mathbf{\Delta v} \!\!-\!\! \mathbf{\Delta u}\right)^T \mathbf{D}^{\!-\!1} \left(\mathbf{\Delta v}\!\!-\!\!\mathbf{\Delta u}\right)}{\sqrt{\left(\mathbf{\Delta v} \!\!-\!\! \mathbf{\Delta u}\right)^T \mathbf{D}^{\!-\!1} \left(\mathbf{\Delta v} \!\!-\!\! \mathbf{\Delta u}\right)}}\right],\label{alpha_D_xt}\\
\beta_D(\mathbf{x}_t) &= \mathcal{Q}\left[\frac{\Gamma_D - \left(\mathbf{\Delta v} \!\!-\!\! \mathbf{\Delta u}\right)^T \mathbf{D}^{\!-\!1} \mathbf{\Delta v}}{\sqrt{\left(\mathbf{\Delta v} \!\!-\!\! \mathbf{\Delta u}\right)^T \mathbf{D}^{\!-\!1} \left(\mathbf{\Delta v} \!\!-\!\! \mathbf{\Delta u}\right)}}\right]\notag\\
&= \mathcal{Q}\left[\frac{\ln \lambda_D - \frac{1}{2}\left(\mathbf{\Delta v} \!\!-\!\! \mathbf{\Delta u}\right)^T \mathbf{D}^{\!-\!1} \left(\mathbf{\Delta v}\!\!-\!\!\mathbf{\Delta u}\right)}{\sqrt{\left(\mathbf{\Delta v} \!\!-\!\! \mathbf{\Delta u}\right)^T \mathbf{D}^{\!-\!1} \left(\mathbf{\Delta v} \!\!-\!\! \mathbf{\Delta u}\right)}}\right].\label{beta_D_xt}
\end{align}}
\end{theorem}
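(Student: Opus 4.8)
The plan is to mirror exactly the argument used for Theorem~\ref{theorem1}, exploiting the fact that the DRSS test statistic $\mathbb{T}(\mathbf{\Delta y})$ in \eqref{statistic_DRSS} is a linear functional of the Gaussian vector $\mathbf{\Delta y}$. Since $\mathbf{\Delta y} \sim \mathcal{N}(\mathbf{\Delta u}, \mathbf{D})$ under $\Hnull$ by \eqref{likelihood_DRSS_0} and $\mathbf{\Delta y} \sim \mathcal{N}(\mathbf{\Delta v}, \mathbf{D})$ under $\Halt$ by \eqref{likelihood_DRSS_1}, the scalar $\mathbb{T}(\mathbf{\Delta y}) = (\mathbf{\Delta v} - \mathbf{\Delta u})^T \mathbf{D}^{-1} \mathbf{\Delta y}$ is itself Gaussian under each hypothesis, so it suffices to compute its conditional mean and variance in each case.

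First I would compute the two means by linearity of expectation: under $\Hnull$, $\mathbb{E}[\mathbb{T}(\mathbf{\Delta y})] = (\mathbf{\Delta v} - \mathbf{\Delta u})^T \mathbf{D}^{-1} \mathbf{\Delta u}$, and under $\Halt$, $\mathbb{E}[\mathbb{T}(\mathbf{\Delta y})] = (\mathbf{\Delta v} - \mathbf{\Delta u})^T \mathbf{D}^{-1} \mathbf{\Delta v}$. For the variance, the quadratic-form rule for a linear image of a Gaussian gives, under \emph{both} hypotheses, $\mathrm{Var}[\mathbb{T}(\mathbf{\Delta y})] = (\mathbf{\Delta v} - \mathbf{\Delta u})^T \mathbf{D}^{-1} \mathbf{D}\, \mathbf{D}^{-1} (\mathbf{\Delta v} - \mathbf{\Delta u})$, which collapses to $(\mathbf{\Delta v} - \mathbf{\Delta u})^T \mathbf{D}^{-1} (\mathbf{\Delta v} - \mathbf{\Delta u})$ once we use that $\mathbf{D}$ is symmetric, so $\mathbf{D}^{-1} \mathbf{D}\, \mathbf{D}^{-1} = \mathbf{D}^{-1}$ and $(\mathbf{D}^{-1})^T = \mathbf{D}^{-1}$. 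This is the exact analogue of \eqref{R_0}--\eqref{R_1} under the substitution $(\mathbf{w}, \mathbf{u}, \mathbf{R}, \mathbf{y}) \mapsto (\mathbf{\Delta v}, \mathbf{\Delta u}, \mathbf{D}, \mathbf{\Delta y})$.

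With the two Gaussian laws in hand, and following the same convention as \eqref{false_R}--\eqref{detection_R}, I would write $\alpha_D(\mathbf{x}_t) = \Pr(\mathbb{T}(\mathbf{\Delta y}) \geq \Gamma_D \mid \Hnull)$ and $\beta_D(\mathbf{x}_t) = \Pr(\mathbb{T}(\mathbf{\Delta y}) \geq \Gamma_D \mid \Halt)$ from the decision rule \eqref{decision_DRSS}, and standardize each probability into a $\mathcal{Q}[\cdot]$ via $\Pr(\mathcal{N}(\mu,\sigma^2) \geq \Gamma) = \mathcal{Q}[(\Gamma - \mu)/\sigma]$. This gives the first equalities in \eqref{alpha_D_xt} and \eqref{beta_D_xt} directly. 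The second equalities then follow by substituting $\Gamma_D$ from \eqref{threshold_DRSS}: in the numerator of $\alpha_D$ the cross term $-(\mathbf{\Delta v} - \mathbf{\Delta u})^T \mathbf{D}^{-1}\mathbf{\Delta u}$ combines with $\tfrac{1}{2}(\mathbf{\Delta v} - \mathbf{\Delta u})^T \mathbf{D}^{-1}(\mathbf{\Delta v} + \mathbf{\Delta u})$ to leave $+\tfrac{1}{2}(\mathbf{\Delta v} - \mathbf{\Delta u})^T \mathbf{D}^{-1}(\mathbf{\Delta v} - \mathbf{\Delta u})$, and symmetrically the $\beta_D$ numerator picks up the opposite sign.

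There is no genuine analytical obstacle here; the content is routine Gaussian bookkeeping identical in form to Theorem~\ref{theorem1}. The only point that warrants care is the positive-definiteness of $\mathbf{D}$, which guarantees simultaneously that $\mathbf{D}^{-1}$ exists and is symmetric (needed for the variance simplification) and that the denominator $\sqrt{(\mathbf{\Delta v} - \mathbf{\Delta u})^T \mathbf{D}^{-1}(\mathbf{\Delta v} - \mathbf{\Delta u})}$ is a well-defined strictly positive quantity whenever $\mathbf{\Delta v} \neq \mathbf{\Delta u}$. This is immediate because $\mathbf{D} = \mathbf{A}\mathbf{R}\mathbf{A}^T$ for the full-row-rank differencing matrix $\mathbf{A} = [\mathbf{I}_{N-1}, -\mathbf{1}_{N-1}]$, while $\mathbf{R}$ is positive definite, so the congruence $\mathbf{A}\mathbf{R}\mathbf{A}^T$ is positive definite as well.
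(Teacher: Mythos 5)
Your proposal is correct and takes essentially the same route as the paper's proof: derive the Gaussian law of the linear statistic $\mathbb{T}(\mathbf{\Delta y})$ under each hypothesis (same means, common variance $\left(\mathbf{\Delta v} - \mathbf{\Delta u}\right)^T \mathbf{D}^{-1} \left(\mathbf{\Delta v} - \mathbf{\Delta u}\right)$), standardize the tail probabilities $\Pr\left(\mathbb{T}(\mathbf{\Delta y}) \geq \Gamma_D \,|\, \cdot\right)$ into $\mathcal{Q}[\cdot]$ form, and substitute $\Gamma_D$ from \eqref{threshold_DRSS}. Your explicit verification that $\mathbf{D}$ is positive definite via the congruence with the full-row-rank differencing matrix is a detail the paper leaves implicit, and it strengthens rather than alters the argument.
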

\begin{proof}
Using \eqref{likelihood_DRSS_0}, \eqref{likelihood_DRSS_1_w}, and \eqref{statistic_DRSS}, the distributions of $\mathbb{T}(\mathbf{\Delta y})$ under $\Hnull$ and $\Halt$ are derived as follows
\begin{small}
\begin{align}
&\mathbb{T}(\mathbf{\Delta y}) |\Hnull \notag \\
&\!\sim\! \mathcal{N}\left(\left(\mathbf{\Delta v} \!\!-\!\! \mathbf{\Delta u}\right)^T \mathbf{D}^{\!-\!1} \mathbf{\Delta u}, \left(\mathbf{\Delta v} \!\!-\!\! \mathbf{\Delta u}\right)^T \mathbf{D}^{\!-\!1} \left(\mathbf{\Delta v} \!\!-\!\! \mathbf{\Delta u}\right)\right),\label{D_0}\\
&\mathbb{T}(\mathbf{\Delta y}) |\Halt \notag\\
&\!\sim\! \mathcal{N}\left(\left(\mathbf{\Delta v} \!\!-\!\! \mathbf{\Delta u}\right)^{\!T\!} \mathbf{D}^{\!-\!1} \mathbf{\Delta v}, \left(\mathbf{\Delta v} \!\!-\!\! \mathbf{\Delta u}\right)^T \mathbf{D}^{\!-\!1} \left(\mathbf{\Delta v} \!\!-\!\! \mathbf{\Delta u}\right)\right).\label{D_1}
\end{align}
\end{small}
As per the decision rule in \eqref{decision_DRSS}, the false positive and detection rates are given by
\begin{align}
\alpha_D(\mathbf{x}_t) &\triangleq \Pr\left(\mathbb{T}(\mathbf{\Delta y}) \geq \Gamma_D |\Hnull \right),\label{false_D}\\
\beta_D(\mathbf{x}_t) &\triangleq \Pr\left(\mathbb{T}(\mathbf{\Delta y}) \geq \Gamma_D |\Halt \right).\label{detection_D}
\end{align}
Substituting \eqref{D_0} and \eqref{D_1} into \eqref{false_D} and \eqref{detection_D}, respectively, we obtain the results in \eqref{alpha_D} and \eqref{beta_D} after some algebraic manipulations.
\end{proof}

For $\mathbf{x}_t = \mathbf{x}_t^{\dag}$, the LRT decision rule of the DRSS-based LVS is given by
\begin{equation}\label{DRSS_ori}
\Lambda^{\ast}\left(\mathbf{\Delta y}\right) \triangleq \frac{f\left(\mathbf{\Delta y}|\mathbf{x}_t, \Halt\right)}
{f\left(\mathbf{\Delta y}|\Hnull\right)}  \begin{array}{c}
\overset{\Hoalt}{\geq} \\
\underset{\Honull}{<}
\end{array}%
\lambda_D^{\ast},
\end{equation}
where $\Lambda^{\ast}\left(\mathbf{\Delta y}\right)$ is the likelihood ratio of $\mathbf{\Delta y}$ for $\mathbf{x}_t = \mathbf{x}_t^{\dag}$ and $\lambda_D^{\ast}$ is a threshold for $\Lambda^{\ast}\left(\mathbf{\Delta y}\right)$.
Following Theorem~\ref{theorem2}, the false positive and detection rates of the DRSS-based LVS for $\mathbf{x}_t = \mathbf{x}_t^{\dag}$ are given by
\begin{align}
\alpha_D^{\ast} &= \mathcal{Q}\left[\frac{\ln \lambda_D^{\ast} + \frac{1}{2}\left(\mathbf{\Delta v}^{\dag} \!\!-\!\! \mathbf{\Delta u}\right)^T \mathbf{D}^{\!-\!1} \left(\mathbf{\Delta v}^{\dag}\!\!-\!\!\mathbf{\Delta u}\right)}{\sqrt{\left(\mathbf{\Delta v}^{\dag} \!\!-\!\! \mathbf{\Delta u}\right)^T \mathbf{D}^{\!-\!1} \left(\mathbf{\Delta v}^{\dag} \!\!-\!\! \mathbf{\Delta u}\right)}}\right],\label{alpha_D} \\
\beta_D^{\ast} &= \mathcal{Q}\left[\frac{\ln \lambda_D^{\ast} - \frac{1}{2}\left(\mathbf{\Delta v}^{\dag} \!\!-\!\! \mathbf{\Delta u}\right)^T \mathbf{D}^{\!-\!1} \left(\mathbf{\Delta v}^{\dag}\!\!-\!\!\mathbf{\Delta u}\right)}{\sqrt{\left(\mathbf{\Delta v}^{\dag} \!\!-\!\! \mathbf{\Delta u}\right)^T \mathbf{D}^{\!-\!1} \left(\mathbf{\Delta v}^{\dag} \!\!-\!\! \mathbf{\Delta u}\right)}}\right].\label{beta_D}
\end{align}

Again, note that the results provided in \eqref{alpha_D_xt} and \eqref{beta_D_xt} are for any $\mathbf{x}_t$, which are more general than that provided in \eqref{alpha_D} and \eqref{beta_D}. That is, $\alpha_D^{\ast} = \alpha_D(\mathbf{x}_t^{\dag})$ and $\beta_D^{\ast} = \beta_D(\mathbf{x}_t^{\dag})$. By using \eqref{alpha_D_xt} and \eqref{beta_D_xt}, we can compare the performance of the DRSS-based LVS with that of the RSS-based LVS in a general scenario.

\subsection{Comparison between the RSS-based LVS and the DRSS-based LVS}

We now present the following theorem with regard to the comparison between the RSS-based LVS and the DRSS-based LVS.

\begin{theorem}\label{theorem3}
\emph{For any $\mathbf{x}_t$, we have $\alpha_R^o(\mathbf{x}_t) = \alpha_D(\mathbf{x}_t)$ and $\beta_R^o(\mathbf{x}_t) = \beta_D(\mathbf{x}_t)$ for $\lambda_R = \lambda_D$. That is, for any $\mathbf{x}_t$ the performance of the RSS-based LVS with $p_x=p_x^o(\mathbf{x}_t)$ is identical to the performance of the DRSS-based LVS.}
\end{theorem}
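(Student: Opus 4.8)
The plan is to collapse the entire statement to a single scalar identity. Inspecting \eqref{alpha_R_xt}--\eqref{beta_R_xt} against \eqref{alpha_D_xt}--\eqref{beta_D_xt}, observe that each of the four rates has the form $\mathcal{Q}\!\left[(\ln\lambda \pm \tfrac{1}{2}\kappa)/\sqrt{\kappa}\,\right]$, where $\kappa$ is the pertinent quadratic form. Once $\lambda_R = \lambda_D$ is imposed, both $\alpha$ and $\beta$ are determined by $\kappa$ alone, so establishing $\alpha_R^o(\mathbf{x}_t)=\alpha_D(\mathbf{x}_t)$ and $\beta_R^o(\mathbf{x}_t)=\beta_D(\mathbf{x}_t)$ simultaneously reduces to the one equality
\begin{equation}\label{eq:key}
(\mathbf{w}-\mathbf{u})^T \mathbf{R}^{-1}(\mathbf{w}-\mathbf{u}) = (\mathbf{\Delta v}-\mathbf{\Delta u})^T \mathbf{D}^{-1}(\mathbf{\Delta v}-\mathbf{\Delta u}).
\end{equation}

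First I would simplify the left-hand side. Setting $\mathbf{s} \triangleq \mathbf{v}-\mathbf{u}$ and using the definition \eqref{definition_w} of $\mathbf{w}$, one has $\mathbf{w}-\mathbf{u} = \mathbf{s} - (\mathbf{s}^T\mathbf{R}^{-1}\mathbf{1}_N/\mathbf{1}_N^T\mathbf{R}^{-1}\mathbf{1}_N)\,\mathbf{1}_N$, which is exactly the minimizer over the scalar $c$ of $(\mathbf{s}+c\,\mathbf{1}_N)^T\mathbf{R}^{-1}(\mathbf{s}+c\,\mathbf{1}_N)$ (consistent with Lemma~\ref{lemma1}). Expanding the quadratic form then telescopes to
\begin{equation}\label{eq:lhs}
(\mathbf{w}-\mathbf{u})^T\mathbf{R}^{-1}(\mathbf{w}-\mathbf{u}) = \mathbf{s}^T\mathbf{R}^{-1}\mathbf{s} - \frac{(\mathbf{s}^T\mathbf{R}^{-1}\mathbf{1}_N)^2}{\mathbf{1}_N^T\mathbf{R}^{-1}\mathbf{1}_N}.
\end{equation}

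Next I would recast the right-hand side in the same variable $\mathbf{s}$. Introducing the $(N-1)\times N$ differencing matrix $\mathbf{T} \triangleq [\,\mathbf{I}_{N-1}\ \ -\mathbf{1}_{N-1}\,]$, we have $\mathbf{\Delta v}-\mathbf{\Delta u} = \mathbf{T}\mathbf{s}$ and, directly from \eqref{D_definition}, $\mathbf{D} = \mathbf{T}\mathbf{R}\mathbf{T}^T$; crucially $\mathbf{T}\mathbf{1}_N = \mathbf{0}$. Equation \eqref{eq:key} thus reads $\mathbf{s}^T\mathbf{T}^T(\mathbf{T}\mathbf{R}\mathbf{T}^T)^{-1}\mathbf{T}\mathbf{s}$, and in view of \eqref{eq:lhs} it suffices to establish the matrix identity
\begin{equation}\label{eq:matid}
\mathbf{T}^T(\mathbf{T}\mathbf{R}\mathbf{T}^T)^{-1}\mathbf{T} = \mathbf{R}^{-1} - \frac{\mathbf{R}^{-1}\mathbf{1}_N\mathbf{1}_N^T\mathbf{R}^{-1}}{\mathbf{1}_N^T\mathbf{R}^{-1}\mathbf{1}_N}.
\end{equation}

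The heart of the argument, and the step I expect to be the main obstacle, is \eqref{eq:matid}; I would prove it without ever inverting $\mathbf{T}\mathbf{R}\mathbf{T}^T$ explicitly, by showing the two symmetric $N\times N$ sides agree on a spanning set of $\mathbb{R}^N$. Both sides annihilate $\mathbf{1}_N$: the right side because the correction term cancels $\mathbf{R}^{-1}\mathbf{1}_N$, the left side because $\mathbf{T}\mathbf{1}_N=\mathbf{0}$. Both sides also act identically on $\mathrm{range}(\mathbf{R}\mathbf{T}^T)$: right-multiplying either side by $\mathbf{R}\mathbf{T}^T$ returns $\mathbf{T}^T$, since on the right the correction carries the factor $\mathbf{1}_N^T\mathbf{T}^T=(\mathbf{T}\mathbf{1}_N)^T=\mathbf{0}$, and on the left the inner factors telescope. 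Because $\mathbf{R}$ is positive definite we have $\mathbf{1}_N^T\mathbf{R}^{-1}\mathbf{1}_N>0$, which forces $\mathbf{1}_N\notin\mathrm{range}(\mathbf{R}\mathbf{T}^T)$; hence $\{\mathbf{1}_N\}$ together with the $(N-1)$-dimensional $\mathrm{range}(\mathbf{R}\mathbf{T}^T)$ spans $\mathbb{R}^N$, and the two linear maps coincide. Substituting \eqref{eq:matid} into $\mathbf{s}^T\mathbf{T}^T(\mathbf{T}\mathbf{R}\mathbf{T}^T)^{-1}\mathbf{T}\mathbf{s}$ reproduces \eqref{eq:lhs}, which proves \eqref{eq:key} and hence the equality of both rates for $\lambda_R=\lambda_D$.
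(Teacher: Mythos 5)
Your proposal is correct, and it takes a genuinely different route from the paper's proof. Both arguments begin identically, reducing the claim (at $\lambda_R=\lambda_D$) to the single quadratic-form identity $(\mathbf{w}-\mathbf{u})^T\mathbf{R}^{-1}(\mathbf{w}-\mathbf{u})=(\mathbf{\Delta v}-\mathbf{\Delta u})^T\mathbf{D}^{-1}(\mathbf{\Delta v}-\mathbf{\Delta u})$, but from there the paper proceeds by a whitening step: it invokes the SVD of $\mathbf{R}$ and an invariance-of-detector-performance argument (citing matched subspace detector results) to claim it suffices to verify the identity for $\mathbf{R}=\mathbf{I}_N$, after which it computes both sides explicitly --- the left side by direct summation, yielding $\sum_i g_i^2 - \frac{1}{N}\bigl(\sum_i g_i\bigr)^2$, and the right side via the Sherman--Morrison formula applied to $\mathbf{D}=\mathbf{I}_{N-1}+\mathbf{1}_{N-1}\mathbf{1}_{N-1}^T$. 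You instead prove the identity directly for arbitrary positive-definite $\mathbf{R}$, by introducing the differencing matrix $\mathbf{T}=[\,\mathbf{I}_{N-1}\ -\mathbf{1}_{N-1}\,]$, observing $\mathbf{D}=\mathbf{T}\mathbf{R}\mathbf{T}^T$ and $\mathbf{T}\mathbf{1}_N=\mathbf{0}$, and establishing the projector identity
\begin{equation*}
\mathbf{T}^T\left(\mathbf{T}\mathbf{R}\mathbf{T}^T\right)^{-1}\mathbf{T}=\mathbf{R}^{-1}-\frac{\mathbf{R}^{-1}\mathbf{1}_N\mathbf{1}_N^T\mathbf{R}^{-1}}{\mathbf{1}_N^T\mathbf{R}^{-1}\mathbf{1}_N}
\end{equation*}
by checking agreement of both symmetric maps on $\mathbf{1}_N$ and on $\mathrm{range}(\mathbf{R}\mathbf{T}^T)$, with the dimension count closed by $\mathbf{1}_N^T\mathbf{R}^{-1}\mathbf{1}_N>0$; all of these verifications are sound (you implicitly use that $\mathbf{T}$ has full row rank so that $\mathbf{T}\mathbf{R}\mathbf{T}^T$ is positive definite, which is immediate from its form and worth one sentence). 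Your approach buys two things. First, rigor: the paper's reduction to $\mathbf{R}=\mathbf{I}_N$ is informal --- forming DRSS observations from the whitened vector $\mathbf{y}'$ is not the same statistic as whitening the DRSS observations (in particular $\mathbf{T}\mathbf{A}\mathbf{1}_N\neq\mathbf{0}$ for a general whitening matrix $\mathbf{A}$, so the attacker's power term no longer cancels automatically), so the invariance claim needs more care than the paper gives it; your spanning-set argument sidesteps this entirely. Second, transparency: your proof exhibits the common value of both sides as the squared norm of $\mathbf{s}=\mathbf{v}-\mathbf{u}$ after projecting out $\mathbf{1}_N$ in the $\mathbf{R}^{-1}$ metric, which makes the ``why'' of the RSS/DRSS equivalence visible, whereas the paper's Sherman--Morrison computation verifies it only numerically in the whitened coordinates. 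Your opening reduction is also slightly cleaner than the paper's: you note that with $\lambda_R=\lambda_D$ both rates are functions of the single scalar $\kappa$, whereas the paper phrases the reduction as $\zeta_R^o-\eta_R^o=\zeta_D-\eta_D$, which reaches the same conclusion but less directly.
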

\begin{proof}
Based on \eqref{alpha_R_xt}, \eqref{beta_R_xt}, \eqref{alpha_D_xt}, and \eqref{beta_D_xt}, we can see that $\alpha_R^o(\mathbf{x}_t)$, $\beta_R^o(\mathbf{x}_t)$, $\alpha_D(\mathbf{x}_t)$, and $\beta_D(\mathbf{x}_t)$ are all in the form of a $\mathcal{Q}$ function. We denote $\alpha_R^o(\mathbf{x}_t) = \mathcal{Q}(\zeta_R^o)$, $\beta_R^o(\mathbf{x}_t) = \mathcal{Q}(\eta_R^o)$, $\alpha_D(\mathbf{x}_t) = \mathcal{Q}(\zeta_D)$, and $\beta_D(\mathbf{x}_t) = \mathcal{Q}(\eta_D)$. In order to prove $\alpha_R^o(\mathbf{x}_t) = \alpha_D(\mathbf{x}_t)$ and $\beta_R^o(\mathbf{x}_t) = \beta_D(\mathbf{x}_t)$ for $\lambda_R = \lambda_D$, we only need to prove $\zeta_R^o - \eta_R^o = \zeta_D - \eta_D$. As per \eqref{alpha_R_xt}, \eqref{beta_R_xt}, \eqref{alpha_D_xt}, and \eqref{beta_D_xt}, in order to prove $\zeta_R^o - \eta_R^o = \zeta_D - \eta_D$ (such as to prove Theorem~\ref{theorem3}) we have to prove the following equation
\begin{align}\label{equation_prove1}
\left(\mathbf{w} \!\!-\!\! \mathbf{u}\right)^T \mathbf{R}^{\!-\!1} \left(\mathbf{w}\!\!-\!\!\mathbf{u}\right) =
\left(\mathbf{\Delta v} \!\!-\!\! \mathbf{\Delta u}\right)^T \mathbf{D}^{\!-\!1} \left(\mathbf{\Delta v}\!\!-\!\!\mathbf{\Delta u}\right).
\end{align}
Based on the singular value decomposition (SVD) of $\mathbf{R}$, we can transform the RSS observation vector $\mathbf{y}$ into another observation vector $\mathbf{y}'$ by rotating and scaling\footnote{The covariance matrix $\mathbf{R}$ is a real positive-definite symmetric matrix, and thus the SVD of $\mathbf{R}$ can be written as $\mathbf{R} = \mathbf{S}\mathbf{R}'\mathbf{S}^T$. As such, $\mathbf{y}'$ is given by $\mathbf{y}' = \mathbf{R}'^{\frac{1}{2}}\mathbf{S}\mathbf{y}$ and the covariance matrix of $\mathbf{y}'$ will be $\mathbf{I}_N$.}. We can then obtain the DRSS observations from $\mathbf{y}'$ instead of $\mathbf{y}$. The transformation from $\mathbf{y}$ to $\mathbf{y}'$ is unique since the singular values of $\mathbf{R}$ are unique. In addition, $\mathbf{y}$ follows a multivariate normal distribution. As such, the transformation from $\mathbf{y}$ to $\mathbf{y}'$ keeps all the properties of $\mathbf{y}$ in $\mathbf{y}'$, which means the performance of an LVS based on $\mathbf{y}$ is identical to the performance of an LVS based on $\mathbf{y}'$ \cite{scharf1994matched,kay2003an}. Therefore, in order to prove Theorem~3 we only have to prove \eqref{equation_prove1} for $\mathbf{R}=\mathbf{I}_N$.
Denoting $\mathbf{g} = \mathbf{v} - \mathbf{u}$, we have $\Delta v_m- \Delta u_m = g_m - g_N$.
Substituting $\mathbf{R} = \mathbf{I}_N$ into $\mathbf{w}$ given in \eqref{definition_w}, we obtain
\begin{align}
\mathbf{w}-\mathbf{u}= \mathbf{g} - \frac{\mathbf{g}^T \mathbf{R}^{-1}\mathbf{1}_N}
{\mathbf{1}_N^T\mathbf{R}^{-1}\mathbf{1}_N}\mathbf{1}_N
= \mathbf{g} - \left(\frac{1}{N}\sum_{j=1}^N g_j\right)\mathbf{1}_N. \notag
\end{align}
With regard to the left side of \eqref{equation_prove1}, for $\mathbf{R}=\mathbf{I}_N$ we have
\begin{align}\label{left_final}
&\left(\mathbf{w} \!\!-\!\! \mathbf{u}\right)^T \mathbf{R}^{\!-\!1} \left(\mathbf{w}\!\!-\!\!\mathbf{u}\right) = \sum_{i=1}^N \left(g_i - \frac{1}{N}\sum_{j=1}^N g_j \right)^2\notag\\
&= \sum_{i=1}^N \left[g_i^2 - \frac{2}{N}g_i\sum_{j=1}^N g_j + \frac{1}{N^2}\left(\sum_{j=1}^N g_j\right)^2 \right]\notag\\
&= \left[\sum_{i=1}^N g_i^2 \!-\! \frac{2}{N}\left(\sum_{i=1}^N g_i\right)\left(\sum_{j=1}^N g_j\right) \!+\! \frac{1}{N}\left(\sum_{j=1}^N g_j\right)^2 \right]\notag \\
&=  \left[\sum_{i=1}^N g_i^2 - \frac{1}{N}\left(\sum_{i=1}^{N} g_i\right)^2\right].
\end{align}
As per the definition of $\mathbf{D}$ given in \eqref{D_definition}, for $\mathbf{R}=\mathbf{I}_N$ we have
\begin{align}
\mathbf{D} = \mathbf{I}_{N-1} + \mathbf{1}_{(N-1) \times (N-1)},
\end{align}
where $\mathbf{1}_{(N-1) \times (N-1)})$ is the $(N-1) \times (N-1)$ matrix with all elements set to unity. Then, based on the Sherman-Morrison formula \cite{sherman1950adjustment}, we have
\begin{align}\label{D_inverse}
\mathbf{D}^{-1} &= \left[\mathbf{I}_{N-1} + \mathbf{1}_{(N-1) \times (N-1)}\right]^{-1}\notag\\
&=\left[\mathbf{I}_{N-1} + \mathbf{1}_{(N-1)} \times \mathbf{1}_{(N-1)}^T \right]^{-1}\notag\\
&=\left[\mathbf{I}_{N-1}^{-1} - \frac{\mathbf{I}_{N-1}^{-1}\mathbf{1}_{(N-1) \times (N-1)}\mathbf{I}_{N-1}^{-1}}{1+\mathbf{1}_{(N-1)}^T \mathbf{I}_{N-1}^{-1}\mathbf{1}_{(N-1)}} \right]\notag\\
&=\left[\mathbf{I}_{N-1} - \frac{\mathbf{1}_{(N-1) \times (N-1)}}{N}\right].
\end{align}
Substituting \eqref{D_inverse} into the right side of \eqref{equation_prove1}, we have
\begin{align}\label{right_final}
&\left(\mathbf{\Delta v} \!\!-\!\! \mathbf{\Delta u}\right)^T \mathbf{D}^{\!-\!1} \left(\mathbf{\Delta v}\!\!-\!\!\mathbf{\Delta u}\right) \notag\\
&\!=\! \left(\mathbf{\Delta v} \!\!-\!\! \mathbf{\Delta u}\right)^T \left[\mathbf{I}_{N\!-\!1} \!\!-\!\! \frac{\mathbf{1}_{(N\!-\!1) \times (N\!-\!1)}}{N}\right] \left(\mathbf{\Delta v}\!\!-\!\!\mathbf{\Delta u}\right)\notag\\
&\!=\! \left(\mathbf{\Delta v} \!\!-\!\! \mathbf{\Delta u}\right)^T \mathbf{I}_{N\!-\!1} \left(\mathbf{\Delta v}\!\!-\!\!\mathbf{\Delta u}\right)\notag\\
&~~~~\!\!-\!\!\frac{1}{N}\left(\mathbf{\Delta v} \!\!-\!\! \mathbf{\Delta u}\right)^T \mathbf{1}_{(N\!-\!1)} \times \mathbf{1}_{(N\!-\!1)}^T \left(\mathbf{\Delta v}\!\!-\!\!\mathbf{\Delta u}\right)\notag\\
&\!=\! \sum_{i=1}^{N-1}\left(g_i-g_N\right)^2
-\frac{1}{N} \left[\sum_{i=1}^{N-1}\left(g_i-g_N\right)\right]^2 \notag \\
&\!=\! \sum_{i=1}^{N}\left(g_i-g_N\right)^2
-\frac{1}{N} \left[\sum_{i=1}^{N}\left(g_i-g_N\right)\right]^2 \notag \\
&\!=\! \sum_{i=1}^{N}\left(g_i-g_N\right)^2
\!-\!\frac{1}{N} \sum_{i=1}^{N}\left(g_i-g_N\right)\left[\sum_{j=1}^N\left(g_j-g_N\right)\right] \notag \\
&\!=\! \left[\sum_{i=1}^N g_i^2 - \frac{1}{N}\left(\sum_{i=1}^{N} g_i\right)^2\right].
\end{align}
Comparing \eqref{left_final} with \eqref{right_final}, we can see that we have proved \eqref{equation_prove1} for $\mathbf{R}=\mathbf{I}_N$. This completes the proof of Theorem~\ref{theorem3}.
\end{proof}

We note that the result provided in Theorem~\ref{theorem3} is valid for any $\mathbf{R}$, \emph{i.e.}, for any kind of shadowing (correlated or uncorrelated). We also note that in Theorem~\ref{theorem3} the condition to guarantee the RSS-based LVS being identical to the DRSS-based LVS is that $p_x = p_x^o(\mathbf{x}_t)$. This condition forces the malicious user to optimize his transmit power based on the given $\mathbf{x}_t$ in the RSS-based LVS, but not in the DRSS-based LVS. Without this condition, the comparison result between the RSS-based LVS and the DRSS-based LVS is present in the following corollary.

\begin{corollary}\label{corollary1}
\emph{For any $\mathbf{x}_t$, the performance of the RSS-based LVS with $p_x \neq p_x^o(\mathbf{x}_t)$ is better than the performance of the DRSS-based LVS.}
\end{corollary}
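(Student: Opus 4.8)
The plan is to reduce the whole ROC comparison to a single scalar—the \emph{deflection coefficient} of each test—and then to show that the RSS deflection, viewed as a function of the now-free transmit power $p_x$, is strictly minimized exactly at $p_x=p_x^o(\mathbf{x}_t)$, where by Theorem~\ref{theorem3} it coincides with the (fixed) DRSS deflection. First I would record that, for a general $p_x$, the mean of $\mathbf{y}$ under $\Halt$ is $p_x\mathbf{1}_N+\mathbf{v}$, so repeating the derivation of Theorem~\ref{theorem1} verbatim (with $\mathbf{w}$ replaced by $p_x\mathbf{1}_N+\mathbf{v}$) gives false positive and detection rates of $\mathcal{Q}$-function form whose only dependence on the attack is through the scalar
\begin{equation*}
\rho(p_x)\triangleq(p_x\mathbf{1}_N+\mathbf{v}-\mathbf{u})^T\mathbf{R}^{-1}(p_x\mathbf{1}_N+\mathbf{v}-\mathbf{u}).
\end{equation*}
Indeed, eliminating the threshold between \eqref{alpha_R_xt} and \eqref{beta_R_xt} yields the ROC relation $\beta=\mathcal{Q}\left(\mathcal{Q}^{-1}(\alpha)-\sqrt{\rho(p_x)}\right)$, and since $\mathcal{Q}$ is strictly decreasing, at any fixed false positive rate $\alpha$ the detection rate is strictly increasing in the deflection $\sqrt{\rho(p_x)}$. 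Thus \emph{better performance} is exactly \emph{larger $\rho$}, and the comparison becomes threshold-independent.

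The key observation is that $\rho(p_x)=2\phi(p_x,\mathbf{x}_t)$, i.e. the squared RSS deflection is precisely twice the KL divergence of \eqref{KL_divergence}. Lemma~\ref{lemma1} already establishes that $\phi(p_x,\mathbf{x}_t)$ is a strictly convex quadratic in $p_x$—its second derivative $\mathbf{1}_N^T\mathbf{R}^{-1}\mathbf{1}_N$ is positive because $\mathbf{R}$ is positive definite, see \eqref{second_px}—with unique minimizer $p_x^o(\mathbf{x}_t)$. Hence $\rho$ inherits the same strict convexity and the same unique minimizer. On the DRSS side, Section~\ref{sec_DRSS} shows that $\mathbf{\Delta y}$, and therefore its entire detection performance, is independent of $p_x$, so the DRSS deflection is the constant $(\mathbf{\Delta v}-\mathbf{\Delta u})^T\mathbf{D}^{-1}(\mathbf{\Delta v}-\mathbf{\Delta u})$. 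Theorem~\ref{theorem3}, specifically \eqref{equation_prove1}, identifies this constant with $\rho(p_x^o(\mathbf{x}_t))=(\mathbf{w}-\mathbf{u})^T\mathbf{R}^{-1}(\mathbf{w}-\mathbf{u})$, that is, with the minimum value of $\rho$.

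Combining these facts closes the argument: for any $p_x\neq p_x^o(\mathbf{x}_t)$ the strict convexity of $\rho$ gives $\rho(p_x)>\rho(p_x^o(\mathbf{x}_t))$, and the right-hand side equals the DRSS deflection, so the RSS test has a strictly larger deflection coefficient than the DRSS test. By the monotonicity of the ROC in $\sqrt{\rho}$ established above, the RSS-based LVS therefore attains a strictly higher detection rate than the DRSS-based LVS at every common false positive rate, which is the claim. The delicate point, and the step I would spell out most carefully, is the first one—justifying that a single scalar governs the whole ROC so that the comparison does not depend on the choice of $\lambda_R$ and $\lambda_D$; once the relation $\beta=\mathcal{Q}\left(\mathcal{Q}^{-1}(\alpha)-\sqrt{\rho}\right)$ is in hand, the remaining inequalities follow immediately from Lemma~\ref{lemma1} and Theorem~\ref{theorem3} and require no further computation.
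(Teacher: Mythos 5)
Your proof is correct, and for the decisive step it takes a genuinely different route from the paper's. Both arguments reduce the ROC comparison to the two deflection-type quadratic forms, and both lean on the identity \eqref{equation_prove1} from Theorem~\ref{theorem3}; but where you prove $\rho(p_x)>\rho\left(p_x^o(\mathbf{x}_t)\right)$ abstractly --- via $\rho(p_x)=2\phi(p_x,\mathbf{x}_t)$ and the strict convexity of $\phi$ in $p_x$ from \eqref{second_px}, whose unique minimizer is $p_x^o(\mathbf{x}_t)$ by Lemma~\ref{lemma1} --- the paper proves the corresponding inequality \eqref{equation_prove2} by a second explicit computation: it re-invokes the whitening reduction to $\mathbf{R}=\mathbf{I}_N$, evaluates the left side as $\sum_{i=1}^N g_i^2$ in \eqref{general_left}, and compares with \eqref{left_final}, so that the gap appears concretely as $\frac{1}{N}\left(\sum_{i=1}^N g_i\right)^2$. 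Your route buys two things. It requires no fresh computation, and, more importantly, it attributes strictness to the correct hypothesis: the inequality is strict exactly because $p_x\neq p_x^o(\mathbf{x}_t)$, since your $\rho$ keeps the term $p_x\mathbf{1}_N$ explicit. The paper instead writes the $\Halt$ mean as $\mathbf{v}$ (silently absorbing $p_x$) and justifies $\left(\sum_i g_i\right)^2>0$ by $\mathbf{x}_t\neq\mathbf{x}_c$, which is not quite the right reason --- $\mathbf{v}\neq\mathbf{u}$ does not by itself force $\sum_i g_i\neq 0$; with $p_x$ made explicit one has $\sum_i g_i = N\left(p_x-p_x^o(\mathbf{x}_t)\right)$ in the whitened coordinates, so it is precisely the corollary's hypothesis that does the work, exactly as in your convexity argument. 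Your explicit ROC relation $\beta=\mathcal{Q}\left(\mathcal{Q}^{-1}(\alpha)-\sqrt{\rho}\right)$ also makes the threshold-independence of ``better performance'' precise, which the paper leaves implicit in the phrase ``$\beta_R>\beta_D$ for $\alpha_R=\alpha_D$.'' What the paper's computation buys in exchange is an explicit formula for the size of the advantage, $\frac{1}{N}\left(\sum_i g_i\right)^2$, which quantifies how much a suboptimal transmit power costs the attacker.
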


\begin{proof}
For any $p_x$ and $\mathbf{x}_t$, the LRT decision rule of the RSS-based LVS is given by
\begin{equation}\label{RSS_known_gen}
\Lambda\left(\mathbf{y}\right) \triangleq \frac{f\left(\mathbf{y}|p_x, \mathbf{x}_t, \Halt\right)}{f\left(\mathbf{y}|\Hnull\right)}  \begin{array}{c}
\overset{\Hoalt}{\geq} \\
\underset{\Honull}{<}
\end{array}%
\lambda_R,
\end{equation}
where $\Lambda\left(\mathbf{y}\right)$ is the likelihood ratio of $\mathbf{y}$ and $\lambda_R$ is a threshold for $\Lambda\left(\mathbf{y}\right)$. Following Theorem~\ref{theorem1}, the false positive and detection rates of the RSS-based LVS for any $p_x$ and $\mathbf{x}_t$ are given by
\begin{align}
\alpha_R(p_x, \mathbf{x}_t) &= \mathcal{Q}\left[\frac{\ln \lambda_R + \frac{1}{2}\left(\mathbf{v} \!\!-\!\! \mathbf{u}\right)^T \mathbf{R}^{\!-\!1} \left(\mathbf{v}\!\!-\!\!\mathbf{u}\right)}{\sqrt{\left(\mathbf{v} \!\!-\!\! \mathbf{u}\right)^T \mathbf{R}^{\!-\!1} \left(\mathbf{v} \!\!-\!\! \mathbf{u}\right)}}\right],\label{alpha_R_gen} \\
\beta_R(p_x, \mathbf{x}_t) &= \mathcal{Q}\left[\frac{\ln \lambda_R - \frac{1}{2}\left(\mathbf{v} \!\!-\!\! \mathbf{u}\right)^T \mathbf{R}^{\!-\!1} \left(\mathbf{v}\!\!-\!\!\mathbf{u}\right)}{\sqrt{\left(\mathbf{v} \!\!-\!\! \mathbf{u}\right)^T \mathbf{R}^{\!-\!1} \left(\mathbf{v} \!\!-\!\! \mathbf{u}\right)}}\right].\label{beta_R_gen}
\end{align}
Then, Corollary~1 can be presented in math as that given $p_x \neq p_x^o(\mathbf{x}_t)$, we have $\beta_R(p_x, \mathbf{x}_t) > \beta_D(\mathbf{x}_t)$ for $\alpha_R(p_x, \mathbf{x}_t)= \alpha_D(\mathbf{x}_t)$ or $\alpha_R(p_x, \mathbf{x}_t)< \alpha_D(\mathbf{x}_t)$ for $\beta_R(p_x, \mathbf{x}_t) = \beta_D(\mathbf{x}_t)$.
Given the proof of Theorem~\ref{theorem3}, in order to prove Corollary~\ref{corollary1} we only have to prove the following equation
\begin{align}\label{equation_prove2}
\left(\mathbf{v} \!\!-\!\! \mathbf{u}\right)^T \mathbf{R}^{\!-\!1} \left(\mathbf{v}\!\!-\!\!\mathbf{u}\right) >
\left(\mathbf{\Delta v} \!\!-\!\! \mathbf{\Delta u}\right)^T \mathbf{D}^{\!-\!1} \left(\mathbf{\Delta v}\!\!-\!\!\mathbf{\Delta u}\right).
\end{align}
Following similar manipulations in \eqref{left_final}, for $\mathbf{R} = \mathbf{I}_N$ we have
\begin{align}\label{general_left}
\left(\mathbf{v} \!\!-\!\! \mathbf{u}\right)^T \mathbf{R}^{\!-\!1} \left(\mathbf{v}\!\!-\!\!\mathbf{u}\right) = \sum_{i=1}^N g_i^2.
\end{align}
Since the malicious user's true location cannot be the same as his claimed location, \emph{i.e.}, $\mathbf{x}_t \neq \mathbf{x}_c$, we have $\mathbf{v} \neq \mathbf{u}$ and $\left(\sum_{i=1}^{N} g_i\right)^2 > 0$. As such, as per \eqref{left_final} and \eqref{general_left} we have
\begin{align}\label{general_large}
\left(\mathbf{v} \!\!-\!\! \mathbf{u}\right)^T \mathbf{R}^{\!-\!1} \left(\mathbf{v}\!\!-\!\!\mathbf{u}\right) > \left(\mathbf{w} \!\!-\!\! \mathbf{u}\right)^T \mathbf{R}^{\!-\!1} \left(\mathbf{w}\!\!-\!\!\mathbf{u}\right).
\end{align}
Based on \eqref{equation_prove1} and \eqref{general_large}, we have proved \eqref{equation_prove2}, which completes the proof of Corollary~\ref{corollary1}.
\end{proof}

We note that Corollary~\ref{corollary1} presents a fair comparison between the RSS-based LVS and the DRSS-based LVS when the malicious user does not know the transmit power of the legitimate user and thus cannot optimize his transmit power.


Under the best attack strategies of the malicious user, the comparison result between the RSS-based LVS and the DRSS-based LVS is present in the following corollary.

\begin{corollary}\label{corollary2}
\emph{We have $\alpha_R^{\ast}=\alpha_D^{\ast}$ and $\beta_R^{\ast}=\beta_D^{\ast}$ for $\lambda_R^{\ast} = \lambda_D^{\ast}$. That is, the performance of the RSS-based LVS for $p_x = p_x^{\ast}$ and $\mathbf{x}_t = \mathbf{x}_t^{\ast}$ is identical to the performance of the DRSS-based LVS for $\mathbf{x}_t = \mathbf{x}_t^{\dag}$.}
\end{corollary}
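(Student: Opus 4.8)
The plan is to reduce Corollary~\ref{corollary2} to the pointwise identity \eqref{equation_prove1} already established in the proof of Theorem~\ref{theorem3}, observing that the two optimal attack strategies arise from minimizing objective functions that are in fact identical. First I would note, from \eqref{alpha_R}, \eqref{beta_R}, \eqref{alpha_D}, and \eqref{beta_D}, that once $\lambda_R^{\ast}=\lambda_D^{\ast}$ the four rates depend on the problem data only through the two scalar quadratic forms $(\mathbf{w}^{\ast}-\mathbf{u})^T\mathbf{R}^{-1}(\mathbf{w}^{\ast}-\mathbf{u})$ and $(\mathbf{\Delta v}^{\dag}-\mathbf{\Delta u})^T\mathbf{D}^{-1}(\mathbf{\Delta v}^{\dag}-\mathbf{\Delta u})$. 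Hence it suffices to prove that these two scalars are equal; the arguments of the $\mathcal{Q}$ functions then match term by term in both numerator and denominator.

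The key observation is that the RSS and DRSS attack-optimization objectives coincide as functions of $\mathbf{x}_t$. Indeed, by \eqref{KL_divergence_min} the minimized RSS objective is $\phi(p_x^o(\mathbf{x}_t),\mathbf{x}_t)=\tfrac{1}{2}(\mathbf{w}-\mathbf{u})^T\mathbf{R}^{-1}(\mathbf{w}-\mathbf{u})$, while by \eqref{KL_divergence_DRSS} the DRSS objective is $\varphi(\mathbf{x}_t)=\tfrac{1}{2}(\mathbf{\Delta v}-\mathbf{\Delta u})^T\mathbf{D}^{-1}(\mathbf{\Delta v}-\mathbf{\Delta u})$. The identity \eqref{equation_prove1} proved in Theorem~\ref{theorem3} holds for every admissible $\mathbf{x}_t$ and for arbitrary $\mathbf{R}$ (the $\mathbf{R}=\mathbf{I}_N$ reduction there entails no loss of generality), so $\phi(p_x^o(\mathbf{x}_t),\mathbf{x}_t)=\varphi(\mathbf{x}_t)$ pointwise.

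Next I would use that the two optimizations \eqref{optimal_true} and \eqref{optimal_true_DRSS} are carried out over the identical feasible set $\{\mathbf{x}_t:\|\mathbf{x}_t-\mathbf{x}_c\|_2\geq r\}$. Since the objectives agree at every feasible point, their attained minima agree, giving $\phi(p_x^o(\mathbf{x}_t^{\ast}),\mathbf{x}_t^{\ast})=\varphi(\mathbf{x}_t^{\dag})$, equivalently $(\mathbf{w}^{\ast}-\mathbf{u})^T\mathbf{R}^{-1}(\mathbf{w}^{\ast}-\mathbf{u})=(\mathbf{\Delta v}^{\dag}-\mathbf{\Delta u})^T\mathbf{D}^{-1}(\mathbf{\Delta v}^{\dag}-\mathbf{\Delta u})$. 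Substituting this common value, together with $\lambda_R^{\ast}=\lambda_D^{\ast}$, into \eqref{alpha_R}, \eqref{beta_R}, \eqref{alpha_D}, and \eqref{beta_D} then yields $\alpha_R^{\ast}=\alpha_D^{\ast}$ and $\beta_R^{\ast}=\beta_D^{\ast}$, which is the claim.

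The only real subtlety, which I would flag as the point to get right rather than a genuine obstacle, is the logical step from a pointwise equality of objectives to equality of the separately attained optima. It is important \emph{not} to try to argue that $\mathbf{x}_t^{\ast}=\mathbf{x}_t^{\dag}$ as points, since the minimizer may be non-unique; what is needed and what holds is merely that two identical functions have the same minimum value over the same domain. Because the rates are monotone functions of a single deflection coefficient, and that coefficient equals the doubled minimized KL divergence in each system, matching minimum values suffices and the minimizers themselves never need to be tracked. Everything else follows immediately from Theorem~\ref{theorem3} and the closed-form rates already derived.
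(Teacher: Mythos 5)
Your proposal is correct, and it rests on the same two ingredients as the paper's own proof: the pointwise identity \eqref{equation_prove1}, equivalently $\phi(p_x^o(\mathbf{x}_t),\mathbf{x}_t)=\varphi(\mathbf{x}_t)$ via \eqref{KL_divergence_min} and \eqref{KL_divergence_DRSS}, together with the fact that \eqref{optimal_true} and \eqref{optimal_true_DRSS} minimize over the same feasible set $\{\mathbf{x}_t:\|\mathbf{x}_t-\mathbf{x}_c\|_2\geq r\}$. Where you genuinely differ is the final logical step. The paper argues that the identical objectives force $\mathbf{x}_t^{\ast}=\mathbf{x}_t^{\dag}$ and then obtains the corollary as an instance of Theorem~\ref{theorem3} evaluated at that common point; you deliberately avoid any claim about the minimizers and match only the attained minimum values, \emph{i.e.}, $(\mathbf{w}^{\ast}-\mathbf{u})^T\mathbf{R}^{-1}(\mathbf{w}^{\ast}-\mathbf{u})=(\mathbf{\Delta v}^{\dag}-\mathbf{\Delta u})^T\mathbf{D}^{-1}(\mathbf{\Delta v}^{\dag}-\mathbf{\Delta u})$, which is indeed all that \eqref{alpha_R}, \eqref{beta_R}, \eqref{alpha_D}, and \eqref{beta_D} depend on once $\lambda_R^{\ast}=\lambda_D^{\ast}$. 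Your variant is slightly more careful: identical functions on an identical domain guarantee equal minimum values and equal argmin \emph{sets}, but not that two independently selected minimizers coincide, so the paper's literal assertion $\mathbf{x}_t^{\ast}=\mathbf{x}_t^{\dag}$ implicitly requires a unique minimizer (or a common tie-breaking rule) --- a caveat you rightly flag and route around. What the paper's route buys in exchange is the additional structural byproduct that the adversary's optimal true location is the same in both systems, a fact the authors emphasize after the corollary and confirm numerically in the simulations of Fig.~\ref{fig:theorem3}; your argument, by design, does not deliver that. Note also that both proofs inherit the same unexamined step from Theorem~\ref{theorem3}, namely that establishing \eqref{equation_prove1} for $\mathbf{R}=\mathbf{I}_N$ after the whitening transformation suffices for arbitrary $\mathbf{R}$.
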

\begin{proof}
Based on Theorem~\ref{theorem3}, in order to prove Corollary~\ref{corollary2} we only have to prove $\mathbf{x}_t^{\ast} = \mathbf{x}_t^{\dag}$. We note that $\mathbf{x}_t^{\ast}$ and  $\mathbf{x}_t^{\dag}$ are obtained through minimizing $\phi(p_x^o(\mathbf{x}_t), \mathbf{x}_t)$ and $\varphi(\mathbf{x}_t)$, respectively. As such, in order to prove $\mathbf{x}_t^{\ast} = \mathbf{x}_t^{\dag}$, it suffices to prove $\phi(p_x^o(\mathbf{x}_t), \mathbf{x}_t) =\varphi(\mathbf{x}_t)$. As per \eqref{KL_divergence_min} and \eqref{KL_divergence_DRSS}, we can see that we have proved $\phi(p_x^o(\mathbf{x}_t), \mathbf{x}_t) =\varphi(\mathbf{x}_t)$ in \eqref{equation_prove1}.
\end{proof}

We note that Corollary~\ref{corollary2} presents a comparison between the performance limits of the RSS-based LVS and the DRSS-based LVS. In the proof of Corollary~\ref{corollary2}, we also prove that the malicious user's optimal true locations for the RSS-based LVS and the DRSS-based LVS are the same. We also note that the analysis and results reported in this work are not directly applicable to the colluding threat scenario (where multiple colluding adversaries attack the LVS). Future studies may wish to explore these more sophisticated attacks,  in the context of correlated fading channels. However, although such sophisticated attacks will obviously lead to poorer LVS performance, a conjecture is that the trends discovered here with regard to the impact of correlated shadowing on LVS performance will persist.

\section{Numerical Results}\label{sec_numerical}

We now present numerical results to verify the accuracy of our provided analysis. We also provide some insights on the impact of the spatially correlated shadowing on the performance of the RSS-based LVS and the DRSS-based LVS.


\begin{figure}[!t]
    \begin{center}
   {\includegraphics[width=3.5in, height=3.0in]{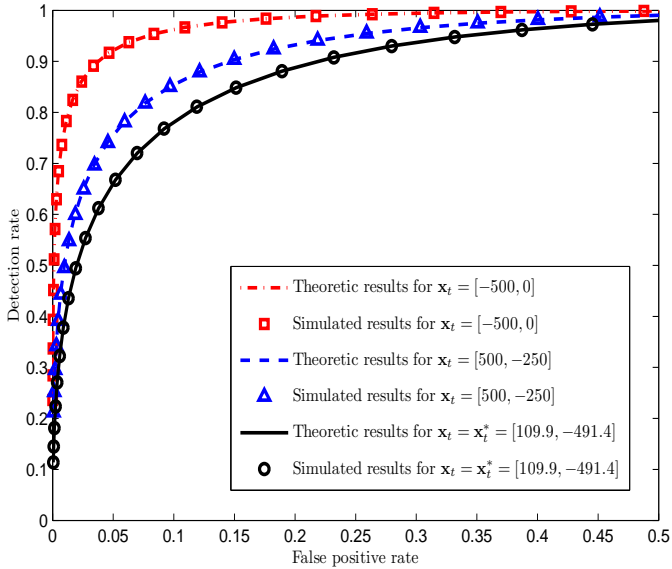}}
    \caption{ROC curves of the RSS-based LVS for $\sigma_{dB} = 7.5$, $D_c = 50$m, $r = 500$m, $p_x = p_x^o(\mathbf{x}_t)$, and $N = 3$ \big($\mathbf{x}_1 = [-250, 10]$, $\mathbf{x}_2 = [0, -10]$, and $\mathbf{x}_3 = [250, 10]$\big).}\label{fig:theorem1}
    \end{center}
\end{figure}

Although we have simulated a wide range of system settings, the associated settings for the results shown in this work (unless otherwise stated) are as follows. In the simulations specifically shown here, the BSs and the claimed locations are deployed in a rectangular area 500m by 20m. The origin is set at the center of the rectangular area, with the x-coordinate taken along the length, and the y-coordinate taken along the width. The claimed locations of both legitimate and malicious users are set such as $\mathbf{x}_c = [50,5]$, which is also the true location of the legitimate user. The locations of all BSs are provided in the caption of each figure, and all BSs collect measurements from the legitimate and malicious users. The path loss exponent is set to $\gamma=3$, and the reference power is set to $p=-10$~dB at $d=1$m.

\begin{figure}[!t]
    \begin{center}
   {\includegraphics[width=3.5in, height=3.0in]{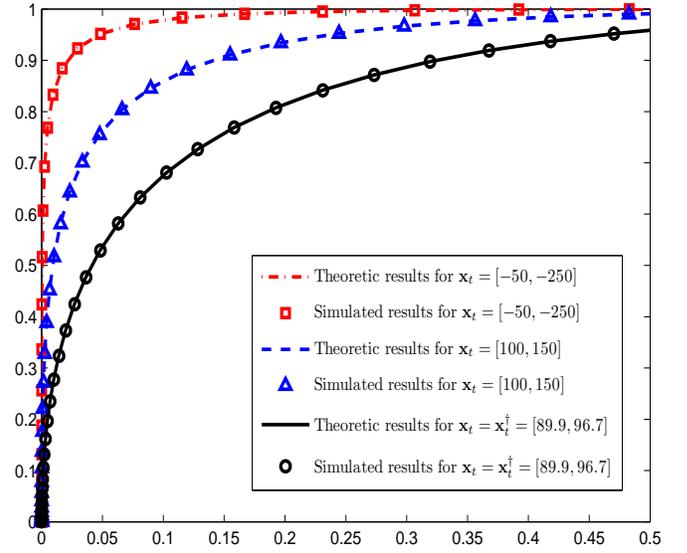}}
    \caption{ROC curves of the DRSS-based LVS for $\sigma_{dB} = 5$, $D_c = 50$m, $r = 100$m, and $N = 4$ \big($\mathbf{x}_1 = [201.4, -9.0]$, $\mathbf{x}_2 = [-161.7, 9.3]$, $\mathbf{x}_3 = [-97.4, 1.2]$, and $\mathbf{x}_4 = [91.5, 2.4]$\big).}\label{fig:theorem2}
    \end{center}
\end{figure}


In Fig.~\ref{fig:theorem1}, we present the Receiver Operating Characteristic (ROC) curves of the RSS-based LVS. In order to obtain this figure, we have set the BSs at regular intervals (250m) on each side of the rectangular area. In this figure, we first observe that the Monte Carlo simulations precisely match the theoretic results, confirming our analysis in Theorem~1. We also observe that the ROC curves for $\mathbf{x}_t \neq \mathbf{x}_t^{\ast}$ dominate the ROC curve for $\mathbf{x}_t = \mathbf{x}_t^{\ast}$. This observation indicates that if the malicious user does not optimize his true location, it will be easier for the RSS-based LVS to detect the malicious user.  In summary, the ROC curve for $\mathbf{x}_t = \mathbf{x}_t^{\ast}$ (analysis presented in \eqref{alpha_R} and \eqref{beta_R}) provides a lower bound for the performance of the RSS-based LVS.

In Fig.~\ref{fig:theorem2}, we present the ROC curves of the DRSS-based LVS. In order to obtain this figure, we have deployed the BSs randomly inside the rectangular area, which relates to a scenario where authorized vehicles represent the BSs. In this scenario the authorized vehicles already have their locations authenticated, and they are used as anchor points in authenticating the positions of yet-to-be authorized vehicles. In this figure, we first observe that the Monte Carlo simulations precisely match the theoretic results, confirming our analysis
in Theorem~2. We also observe that the ROC curves for $\mathbf{x}_t \neq \mathbf{x}_t^{\dag}$ dominate the ROC curve for $\mathbf{x}_t = \mathbf{x}_t^{\dag}$. Again, this observation demonstrates the importance of optimally choosing the true location for the malicious user. To conclude, the ROC curve for $\mathbf{x}_t = \mathbf{x}_t^{\dag}$ (analysis presented in \eqref{alpha_D} and \eqref{beta_D}) provides a lower bound for the performance of the DRSS-based LVS.

In Fig.~\ref{fig:theorem3}, we present the ROC curves of the RSS-based LVS and the DRSS-based LVS. In order to obtain this figure, we have set one of the BSs at one side of the rectangular area and deployed the other two BSs randomly inside the rectangular area. This mimics the scenario in which only one fixed BS is available and we have to conduct location verification with the help of two already-authorized vehicles. In this figure, we first observe that the RSS-based LVS for $p_x = p_x^o(\mathbf{x}_t)$ and the DRSS-based LVS achieve identical performance (identical ROC curves). This demonstrates that as long as the malicious user optimizes his transmit power (as per his true location) the RSS-based LVS is identical to the DRSS-based LVS, which confirms the analytical comparison between the RSS-based LVS and the DRSS-based LVS presented in Theorem~3. We also observe that the ROC curves of the RSS-based LVS for $p_x \neq p_x^o(\mathbf{x}_t)$ dominate the ROC curves of the DRSS-based LVS. This observation confirms that if the malicious user does not optimize his transmit power, the RSS-based LVS achieves a better performance than the DRSS-based LVS, which is provided in Corollary~\ref{corollary1}. This indicates that the RSS-based LVS is subjectively better than the DRSS-based LVS since the performance of the DRSS-based LVS is independent of the malicious user's transmit power and the determination of the optimal transmit power for the malicious user is no longer
required in the DRSS-based LVS. In the simulations of Fig.~\ref{fig:theorem3}, we confirmed that the malicious user's optimal true location for the RSS-based LVS is the same as that for the DRSS-based LVS, \emph{i.e.}, $\mathbf{x}_t^{\ast} = \mathbf{x}_t^{\dag}$. As such, Fig.~\ref{fig:theorem3} also confirms our analysis provided in Corollary~\ref{corollary2}.

\begin{figure}[!t]
    \begin{center}
   {\includegraphics[width=3.5in, height=3.0in]{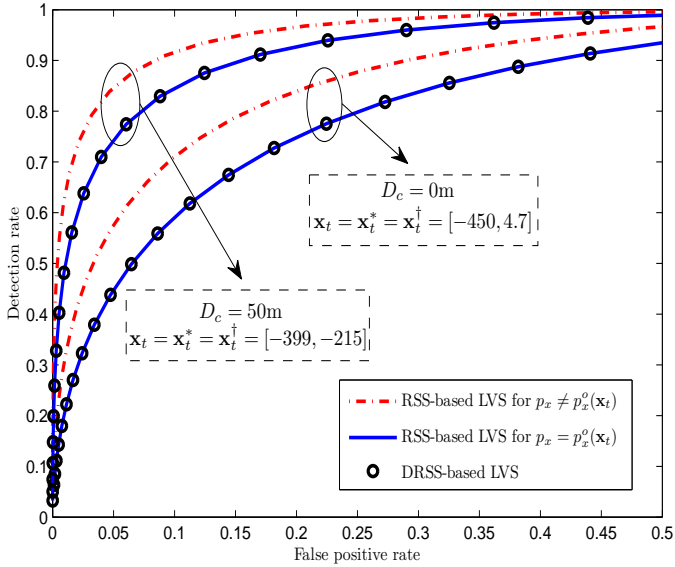}}
    \caption{ROC curves of the RSS-based LVS and the DRSS-based LVS for $\sigma_{dB} = 5$, $D_c = 50$m, $r = 100$m, and $N = 3$ \big($\mathbf{x}_1 = [0, 10]$, $\mathbf{x}_2 = [131.4, -9.3]$, and $\mathbf{x}_3 = [20.6, -0.9]$\big).}\label{fig:theorem3}
    \end{center}
\end{figure}

\begin{figure}[!t]
    \begin{center}
   {\includegraphics[width=3.5in, height=3.0in]{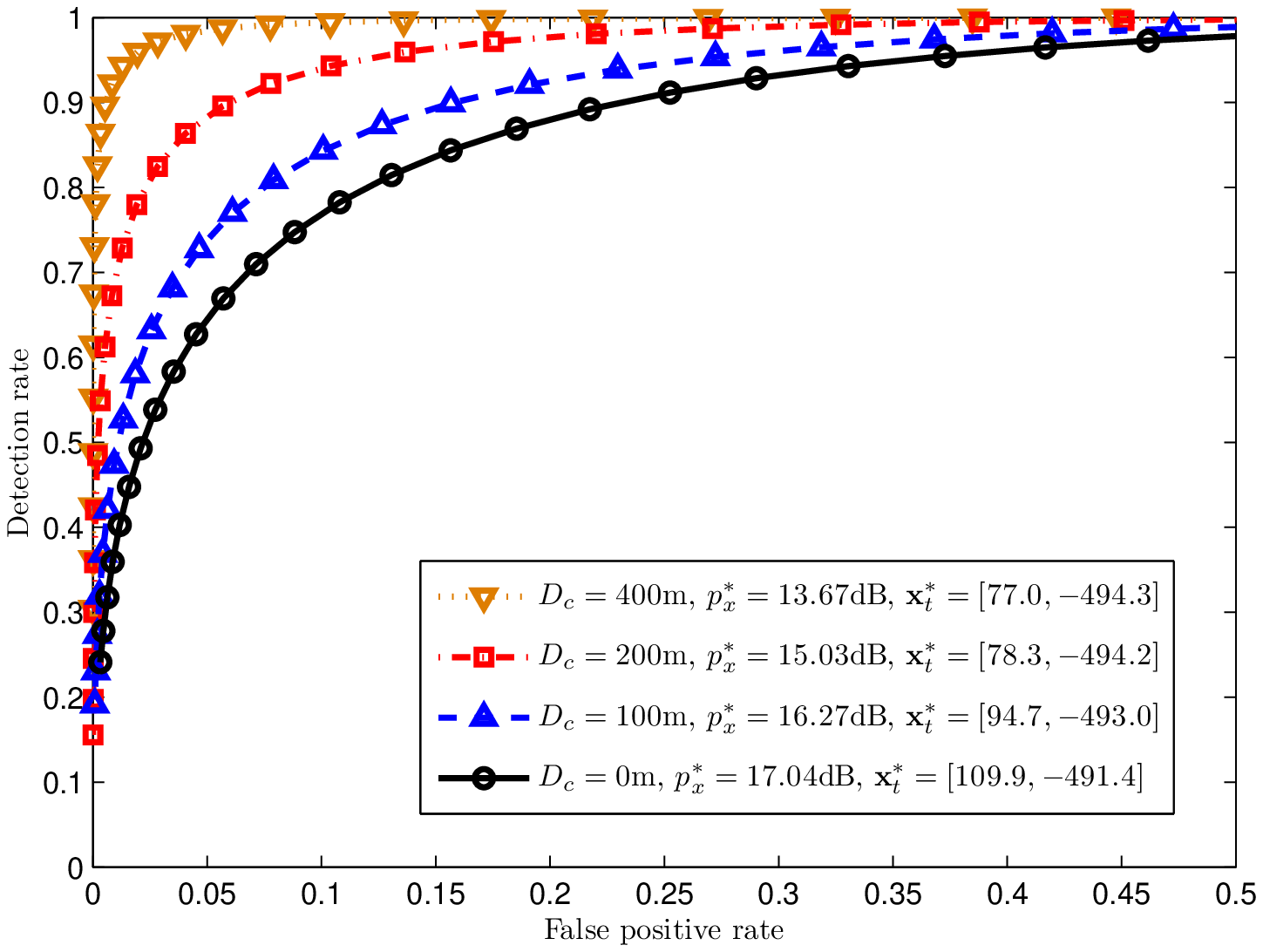}}
    \caption{ROC curves of the RSS-based LVS for $\sigma_{dB} = 7.5$, $r = 500$m, $p_x = p_x^{\ast}$, $\mathbf{x}_t = \mathbf{x}_t^{\ast}$, and $N = 3$ \big($\mathbf{x}_1 = [-250, 10]$, $\mathbf{x}_2 = [0, -10]$, and $\mathbf{x}_3 = [250, 10]$\big).}\label{fig:correlation_RSS}
    \end{center}
\end{figure}

\begin{figure}[!t]
    \begin{center}
   {\includegraphics[width=3.5in, height=3.0in]{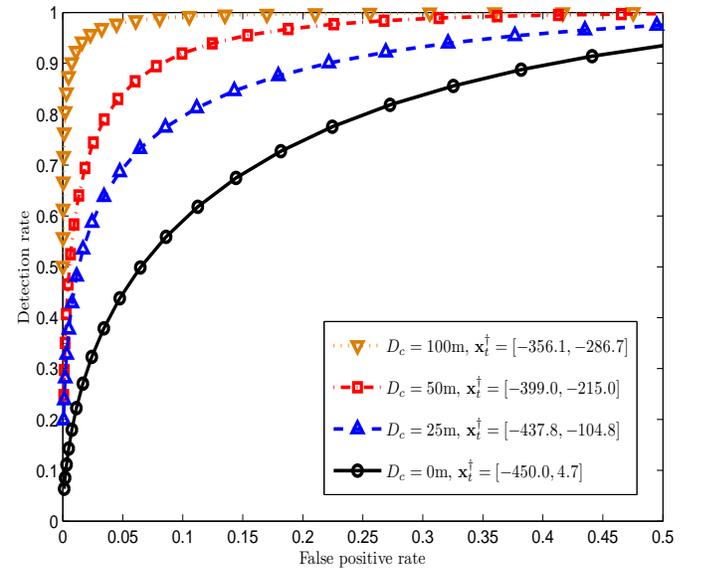}}
    \caption{ROC curves of the DRSS-based LVS for $\sigma_{dB} = 5$, $D_c = 50$m, $r = 100$m, and $N = 3$ \big($\mathbf{x}_1 = [0, 10]$, $\mathbf{x}_2 = [131.4, -9.3]$, and $\mathbf{x}_3 = [20.6, -0.9]$\big).}\label{fig:correlation_DRSS}
    \end{center}
\end{figure}

In Fig.~\ref{fig:correlation_RSS} and Fig.~\ref{fig:correlation_DRSS}, we investigate the impact of the spatial correlation of the shadowing on the performance of the RSS-based LVS and the DRSS-based LVS, where $D_c = 0$m corresponds to the case with uncorrelated shadowing. In Fig.~\ref{fig:correlation_RSS}, we set $p_x = p_x^{\ast}$ and $\mathbf{x}_t = \mathbf{x}_t^{\ast}$ for the RSS-based LVS. From \eqref{optimal_px} and \eqref{optimal_true}, we can see that both $p_x^{\ast}$ and $\mathbf{x}_t^{\ast}$ are dependent on the spatial correlation of the shadowing (they are both functions of $D_c$), and the exact values of $p_x^{\ast}$ and $\mathbf{x}_t^{\ast}$ corresponding to each $D_c$ are also provided in Fig.~\ref{fig:correlation_RSS}.
In this figure, we first observe the ROC curve moves toward the upper left corner (\emph{i.e.}, the area under the ROC curve increases)
as $D_c$ increases, which shows that the performance of the RSS-based LVS becomes better as $D_c$ increases. This observation demonstrates that the spatial correlation of the shadowing improves the detection performance of the RSS-based LVS. We note that the above performance improvement due to the spatial correlation of the shadowing is only achieved under the condition  $p_x = p_x^{\ast}$ and $\mathbf{x}_t = \mathbf{x}_t^{\ast}$. If the malicious user is physically limited at some specific location $\mathbf{x}_t$ and he optimizes his transmit power as per $\mathbf{x}_t$, \emph{i.e.}, $p_x = p_x^o(\mathbf{x}_t)$, the spatial correlation of the shadowing does not have a monotonic impact on the performance of the RSS-based LVS. As per Theorem~3 and Corollary~\ref{corollary2}, the ROC curves provided in Fig.~\ref{fig:correlation_RSS} are also valid for the DRSS-based LVS, in which we have to set $\mathbf{x}_t = \mathbf{x}_t^{\dag}$. As such, we can conclude that the spatial correlation of the shadowing also improves the detection performance of the DRSS-based LVS. Also, for a determined $\mathbf{x}_t$ the spatial correlation does not have a monotonic impact on the performance of the DRSS-based LVS. For confirmation, we also provide the ROC curves for the DRSS-based LVS in Fig.~\ref{fig:correlation_DRSS} under different settings. The same conclusion on the impact of spatial correlation of the shadowing can be drawn from Fig.~\ref{fig:correlation_DRSS}.

\begin{figure}[!t]
    \begin{center}
   {\includegraphics[width=3.5in, height=3.0in]{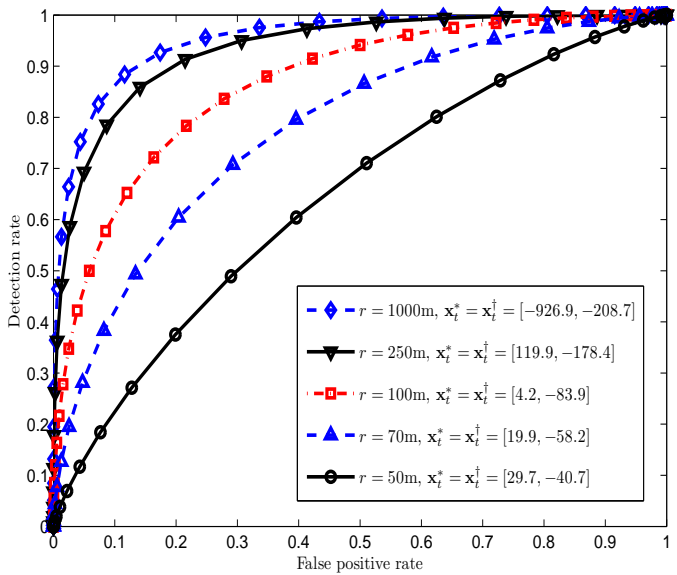}}
    \caption{ROC curves of the RSS-based LVS ($p_x = p_x^{\ast}$) and the DRSS-based LVS for $\sigma_{dB} = 5$, $D_c = 50$m, $\mathbf{x}_t = \mathbf{x}_t^{\ast} = \mathbf{x}_t^{\dag}$, and $N = 3$ \big($\mathbf{x}_1 = [0, 10]$, $\mathbf{x}_2 = [131.4, -9.3]$, and $\mathbf{x}_3 = [20.6, -0.9]$\big).}\label{fig:threshold}
    \end{center}
\end{figure}

In Fig.~\ref{fig:threshold}, we examine the impact of the parameter $r$ on the performance of both the RSS-based LVS and the DRSS-based LVS. We note that $r$ is the minimum distance between the claimed location and the malicious user's true location. As such, the disc determined by $\mathbf{x}_c$ and $r$ can be interpreted as the area protected by some physical boundaries. In Fig.~\ref{fig:threshold}, we  observe that the ROC curve moves toward the upper left corner as $r$ increases, which indicate that the malicious user will be easier to detect if he is further away from his claimed location. We also observe that the performance improvement due to increasing $r$ is not significant when $r$ is larger than some specific value (\emph{e.g.}, $r > 250$m).

\section{Conclusion}\label{sec_conclusion}

In this work we have formally analyzed for the first time, the performance of two important types of LVSs (RSS and DRSS-based) in the regime of spatially correlated shadowing. Our analysis illustrates that for anticipated levels of correlated shadowing both types of LVSs will have much improved performance. In addition, we formally proved that in fact a DRSS-based LVS has identical performance to that of an RSS-based LVS, for all levels of correlated shadowing. Even more surprisingly, the identical performance of RSS and DRSS-based LVSs was found to hold even when the adversary  cannot optimize his true location. We found the performance of an RSS-based LVS to be better than a DRSS-based LVS only in the case where the adversary cannot optimize \emph{all} variables under her control. The results presented here will be important for a wide range of practical location authentication systems deployed in support of emerging wireless network applications.

\section*{Acknowledgments}

This work was funded by The University of New South Wales and Australian Research Council Grant DP120102607.

\end{document}